\let\oldproof\proof
\renewcommand{\proof}{\color{black!60!blue}\oldproof}
\newtheorem{thm}{Theorem}
\newtheorem{lemma}[thm]{Lemma}
\newtheorem{cor}[thm]{Corollary}
\theoremstyle{remark}
\newtheorem{rem}[thm]{Remark}
\theoremstyle{definition}
\newcommand{\Px}{\mathbb{P}}
\newcommand{\Ex}{\mathbb{E}}
\newcommand{\R}{\mathbb{R}}
\newcommand{\Dt}{\Delta t}
\newcommand{\dt}{\Delta t}
\newcommand{\Dl}[1]{\Delta \lambda^\infty_{#1}(i,\rho)}
\newcommand{\linf}{\lambda^\infty(i,\rho)}
\newcommand{\XX}{\mathcal{X}}
\newcommand{\YY}{\mathcal{Y}}
\newcommand{\KK}{\mathcal{K}}
\newcommand{\Xt}{\overline{\mathbf{X}}_t}
\newcommand{\Yt}{\overline{\mathbf{Y}}_t}
\newcommand{\sbt}{\,\begin{picture}(-1,1)(-1,-2)\circle*{2}\end{picture}\ }  
\definecolor{gold}{rgb}{0.64,0.54,0.29}
\newcommand{\sk}{\smallskip}
\newcommand{\mk}{\medskip}
\newcommand{\bk}{\bigskip}
\newcounter{defcounter}
\newenvironment{assumptions}{%
\addtocounter{equation}{-1}
\refstepcounter{defcounter}

\begin{equation}}
{\end{equation}}
\title{A Stochastic Model for the Early Stages of Highly Contagious Epidemics by using a State-Dependent Point Process}
\author{Jonathan Ch\'{a}vez-Casillas\thanks{Department of Mathematics and Applied Mathematical Sciences, University of Rhode Island, USA ({\tt jchavezc@uri.edu}).}
  }
\begin{document}

\maketitle

\begin{abstract}
The recent COVID-19 pandemic has shown that when the reproduction number is high and there are no proper measurements in place, the number of infected people can increase dramatically in a short time, producing a phenomenon that many stochastic SIR-like models cannot describe: overdispersion of the number of infected people (i.e., the variance of the number of infected people during any interval is very high compared to the average). To address this issue, in this paper we explore the possibility of modeling the total number of infections as a state dependent self-exciting point process. In this way, infections are not independent among themselves, but any infection  will increase the likelihood of a new infection while also the number of currently infected and recovered individuals are included into determining the likelihood of new infections, Since long term simulation is extremely computationally intensive, exact expressions for the moments of the processes determining the number of infected and recovered individuals are computed, while also simulation algorithms for these state-dependent processes are provided.
\end{abstract}

\bk

\textbf{Keywords:} SIRQ-model, overdispersion, Self-exciting process, Quarantine distribution, stochastic-intensity, state-dependent.

\mk

\section{Introduction}

As the, now seemingly endemic, disease of COVID-19 is being incorporated into our daily lives, we have been reminded that epidemics are a phenomena that can expand quickly and have devastating economic and social consequences. Highly contagious diseases can produce huge damage, and so the development of accurate and effective models for epidemics is important. However, as noted in  \cite{adiga2020data}, different kinds of models are appropriate
at different stages, and for addressing different kinds of questions. For example, some statistical methods based on machine learning techniques are very useful in predicting the behaviour on the short-term. However, they are not very effective for the long term predictions nor for describing the evolving circumstances. Simple compartmental-type models, and their extensions, that is, structured meta-population models, are useful for
several population-level questions. However, once the outbreak has spread, and complex societal interactions are at play, stochastic agent-based models could be seen as a more robust and effective tool, since they allow for a more systematic representation of complex social interactions, individual
and collective behavioral adaptation, and public policies. With this idea of trying to be more precise in describing different stages of a pandemic the current paper tries to shed some light into how to incorporate certain salient features observed in the data to model future outbreaks of epidemics with a high control reproduction number.

Stochastic modeling of epidemics becomes relevant when the environmental or demographic variability -such as transmission, recovery, births, deaths, or environmental impacts- are too large to account for in a deterministic model (see \cite{allen2017primer}).

The ongoing pandemic has returned epidemic modeling at the forefront of worldwide public policy-making \cite{bertozzi2020}. Many scientist from different fields, particularly medicine, biology, mathematics, physics and chemistry (for a few, non-exhaustive, list of examples, see \cite{cheng2020regional},\cite{chiang2022hawkes},\cite{dos2021machine},\cite{franco2020feedback},  \cite{hazarika2020modelling}, \cite{liu2022model}, \cite{van2017reproduction})

Due to their versatility and capabilities to model properties that regular stochastic SIR models can't, point process models have been recently explored as a viable option. Models using these processes are  data  driven  and  are flexible enough to allow  for  parametric  or  nonparametric estimation of the reproduction number and the time scale at which the contagions occur (see   \cite{bertozzi2020}. Furthermore, at some level, they can also be viewed as stochastic versions of popular compartmental models used in epidemiology. Since these processes have just been recently introduced, before stating the model assumptions and results, in remainder of this introduction, an introduction to point processes is provided as well as how they have recently been used to model the novel COVID-19 epidemics.

\subsection{An introduction to Point processes} \label{SubSec:intro}

Point processes can be understood roughly as a random set of points in a space $\XX$. The topology and properties of the space can be very general, making them a very versatile tool for modeling different phenomena. In this setting, these points, may represent anything, but the most common usage of point processes are occurrences of events in time, location of objects in the space or a mixture of the previous two. Some of the most recognized and applied types of point processes are:

\begin{itemize}
	\item  \emph{Spatial Point process:} Here, the space $\XX$ represents some fixed ``geographical location'', generally, $\XX\subseteq\R^n$. In this instance the number of events occurring and their location  are random.
 \item \emph{Temporal point process:} In this case, the space $\XX$ is a totally ordered set. Generally, $\XX=[0,T]\subset\R$ and each point represents the times at which certain events occurred. In these kind of processes the number of events and the times at which those events occurred are random.
  \item \emph{Spatio-Temporal point process:} Here, the space is a combination of the previous two cases and each point represents a place and time for an event and usually $\XX\subseteq\R_+\times\R^n$.
\end{itemize}

\sk

It also important to mention that the previous cases are sometimes called ``unmarked point processes''. However, sometimes, we want to measure or model an additional characteristic of the point (other than time or location). In this instance, we can attach a ``mark'' to each random point to classify them as belonging to a certain class. This flexibility comes almost at no cost, since ``marked point processes'' are merely a generalization where each point belongs to a space $\YY=\XX\times\KK$, where $\XX$ is one of the spaces described above and $\KK$ is the so-called mark space.

\sk

Generally speaking, when talking about temporal point processes there are mainly two -not disjoint- ways to describe them. The first one was born in the neurophysiologist community where point processes are seen as random variables in a complicated functional space and are characterized and analyzed via their moments (such as the  \textit{average, variance, skewness, kurtosis, etc.}). The advantage of this point of view is that these moments can be associated to statistical quantities which can then be estimated, fitted and calibrated using data. The second way to characterize point processes is through its \textit{stochastic intensity}, which provides a summary, at any given time $t$, of the likelihood that some new future event arrives in the time interval $[t,t+h)$ given its history, i.e., given the times of all past events up until time $t$. This notion of intensity has been used extensively because of its analytical tractability. The resulting family of processes that possess an intensity function is called (stochastic) intensity-based point processes, and actually, contains almost all the point processes which have some practical interest. Exhaustive treatments containing all the important probabilistic and statistical features of Point processes can be found in, e.g.  \cite{verejonesv1,verejonesv2,bremaud2020point}.

A realization of a point process over $[0, \infty)$ is a sequence $\{T_0\}_{n\geq1}$ in $[0, \infty]$ such that

\[
T_0=0,\qquad\qquad T_n<\infty\qquad\text{and}\qquad T_n<T_{n+1}.
\]

\noindent and the point process is considered to be non-explosive if $\lim\limits_{n\to\infty} T_n=\infty$. Further, for each realization of the point process, there is a counting function, or counting process, $N_t$ defined as
\[
N_t=\left\{\begin{array}{rcl} n&\text{ if }& t\in[T_n,T_{n+1}),\ n\geq0\\ \infty &\text{ if }&t\geq \lim_{n\to\infty} T_n\end{array}\right.
\]
The above definition is not that intuitive, but basically, the counting process $N_t$ tells us how many events have occurred up to time $t$. To see this, we can also define the inter-arrival times between events $\tau_n:=T_{n}-T_{n-1}$ for $n\geq1$ and define $\tau_0:=0$. Thus, $N_t$ can be written as

\[
N_t=\max\limits_{n\geq0}\{\tau_0+\tau_1+\ldots+\tau_n\leq t\}.
\]
Many of the point processes that are found in the literature are simple, which means that only one event arrives each time. That is,
\begin{equation}\label{def:simple}
\lim\limits_{h\to0}\Px\left[\frac{N(t+h)-N(t)>1}{h}\right]=0
\end{equation}
\noindent A point process is called \emph{regular} if it possess an intensity function $\lambda(t)$, which is defined, when the limit exist, as
\[
\lambda(t)=\lim\limits_{h\to0}\Ex\left.\left[\frac{N(t+h)-N(t)}{h}\right|\sigma(N_s)_{0\leq s\leq t}\right],
\]
\noindent where $\sigma(N_s)_{0\leq s\leq t}$ denotes the past history of the process $(N_t)_{t\geq0}$. Further, whenever the point process is simple, the above definition is equivalent to,
\[
\lambda(t)=\lim\limits_{h\to0}\Px\left.\left[\frac{N(t+h)-N(t)>0}{h}\right|\sigma(N_s)_{0\leq s\leq t}\right]
\]
Notice that by definition, as $\lambda(t)$ is a conditional expectation over a sigma algebra, it is itself a random process, but sometimes, it can be a deterministic function of time as in the case of the Poisson process. Since probabilities are always non-negative, the intensity function is always non-negative and thus the cumulative intensity function, $\Lambda(t)=\int_0^t \lambda(s) ds$ is a non-decreasing function. The cumulative intensity function always exists (even if the point process is not regular). In fact, it can be shown that the cumulative intensity function is the compensator of the point process $N(t)$. 

A technical remark might be in place here: since a temporal point process is always a càdlàg 
(i.e., right continuous with left limits) submartingale, it can \textit{always} be then decomposed as $N(t)=M(t)+A(t)$, where $M$ is a local martingale and $A$ is an increasing predictable process called the compensator. As can be inferred by the name, it turns out that $\Lambda(t)=A(t)$. Moreover, whenever the measure generated by $\Lambda(t)$ is absolutely continuous with respect to the Lebesgue measure, the ``normal'' intensity function $\lambda(t)$ exists and so the limit above exists.

\sk

To explain the above definitions and before proceeding to the canonical example of interest we can look at the simplest and most common point process: the Poisson process. In this case, the intensity is completely deterministic, i.e. $\lambda(t):[0,\infty)\to[0,\infty)$ and
\vspace{-.3cm}
\[
\Px[N(t)-N(s)=k]=e^{-\int_s^t \lambda(u)du}\frac{1}{k!}\left[\int_s^t \lambda(u)du\right]^k.
\]
Since the intensity is deterministic we always know the \textit{infinitesimal} probability of a new event. That is, the probability that we observe a new event in the interval $[t,t+h)$ is roughly $\approx h\lambda(t)$ and in this case this probability is actually independent of the process' history. In this case, it can be shown that $ \Ex[N_t]=\int_0^t \lambda(s)ds$.

\sk

The Poisson process and many others point processes are good to model events where the rate at which new events appear is independent of how many or when past events occurred, but sometimes the phenomena that wants to be analyzed exhibits a ``clustering effect'', where the appearance of one new event will trigger the occurrence of more new events. An important example that possess this property, and a process that will be used thoroughly during this proposal, is the family of Hawkes processes, which was first studied thoroughly by Ogata (see \cite{ogata1983estimation}) to model the occurrence of earthquakes in Japan. His motivation to use this type of processes was that usually after a big earthquake, many replicas will follow. In the Hawkes Processes family, the (stochastic) intensity function, $\Lambda(t)$, ``feeds'' itself from the point process $N_t$. That is, the intensity increases when a point arrives, which in turn will trigger more points to arrive. In the most classical example, the intensity can be characterized by\\[-.4cm]
\begin{equation}\label{eqn:hawkes:int}
\Lambda(t)=\lambda_0+\int_0^t\phi(t-s)dN_s=\lambda_0+ \sum\limits_{T_i<t}\phi(t-T_i),
\end{equation}
\noindent where $N_t=\{T_1,T_2,T_3,\ldots,T_{N_t}\}$ is itself the random point process and $T_i$ is the time of the $i-$th occurrence. As mentioned before, the intensity is also stochastic (random) but the source of its randomness is not exogenous but comes from the point process itself. Moreover, if $\phi(t)=Q e^{-r t}$, then the process $(N_t,\lambda_t)$ is a Markov Process (a process whose evolution depends solely on the current state the system and not in the history of \textit{how} it arrived to such current state) and the intensity function decays exponentially between events. As before, we can also compute the expected number of events in the interval $[0,T]$ which turns out to be $\Ex[N_t]=\frac{\lambda_0}{1-\int_0^{\infty}\phi(s)ds}$.

\begin{figure}[ht]
\centering
		\includegraphics[width=0.69\textwidth]{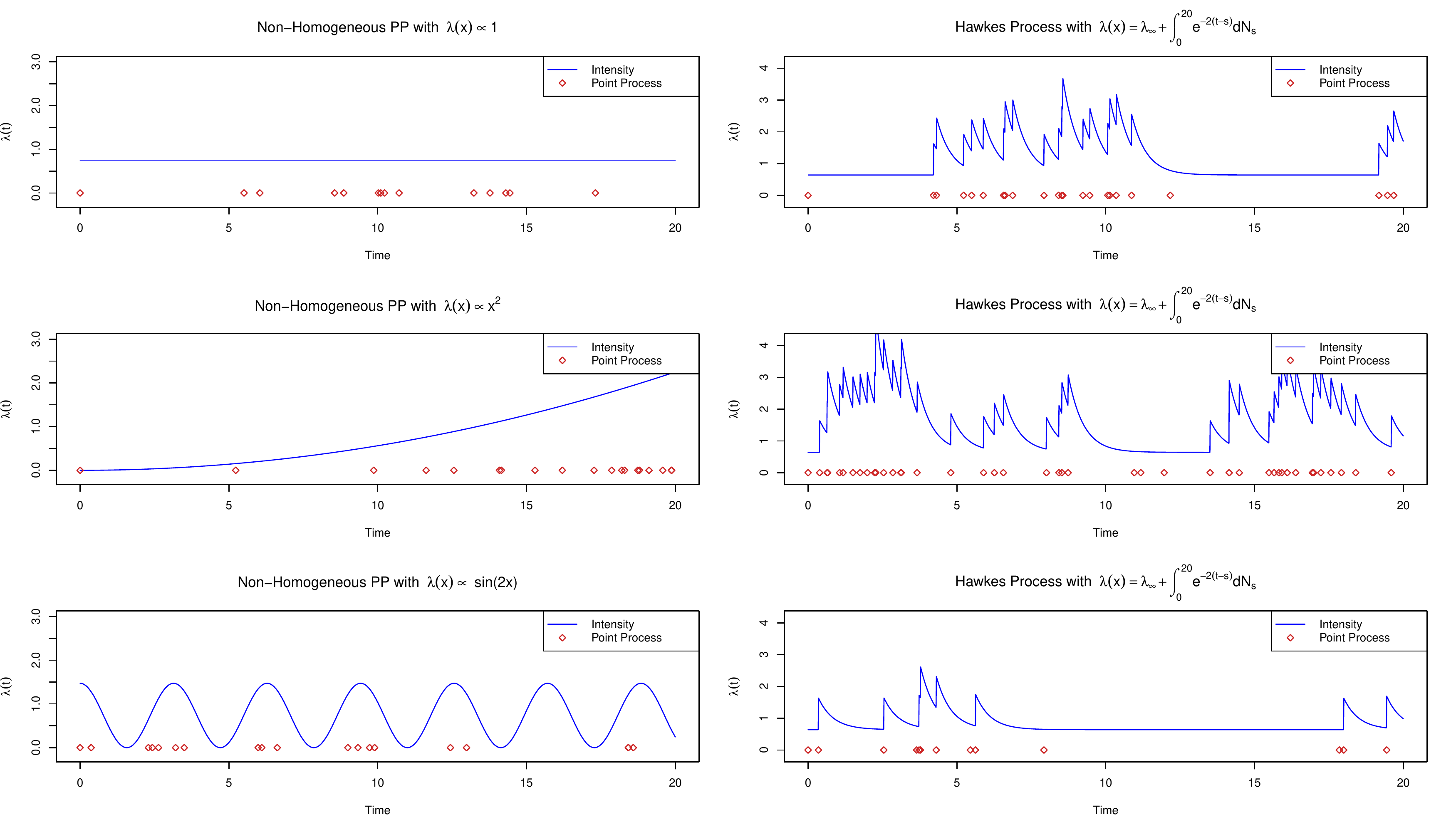}
	\caption{Six realizations of point process with $T=20$ and $\Ex[N_T]=25$. On the left, three different inhomogeneous Poisson processes with intensities proportional to a constant, $x^2$ and $\sin(2x)$ are shown, whereas on the right three realizations of \textit{the same} Hawkes process are displayed.}
	\label{fig:examples}
\end{figure}

\sk

Figure \ref{fig:examples}, tries to illustrate the concepts and processes described above. In all the plots the intensity function and a realization of the point process are shown. At the same time, all the processes below were ``standardized'' by setting the \textit{expected} number of points to be the same in all cases. The left column shows 3 different inhomogeneous Poisson processes with different intensities. Since the intensity is deterministic, it does not matter how many simulations of the point process are performed, the intensity will the the same. As it can be seen, when the intensity is higher, the likelihood of more points increases. In the right column, 3 different realizations of the same Hawkes process are plotted. In this case since the intensity is stochastic, it will change between realizations, but more importantly the \textit{clustering} phenomena can be observed. Whenever a new event arrives, the intensity spikes and thus the likelihood of more events happening increases and to balance the process, in between the arrival of events the intensity decreases exponentially reducing the likelihood of more events coming. However, there is always a baseline intensity for which the intensity process cannot go below guaranteeing that there will be a new event at some point. To illustrate this, we present next a heatmap of the Hawkes process (i.e. the process counting the number of events up to time $t$) and one of its intensity.

\begin{figure}[h]
\centering
\begin{subfigure}{.5\textwidth}
  \centering
  \includegraphics[width=.9\linewidth]{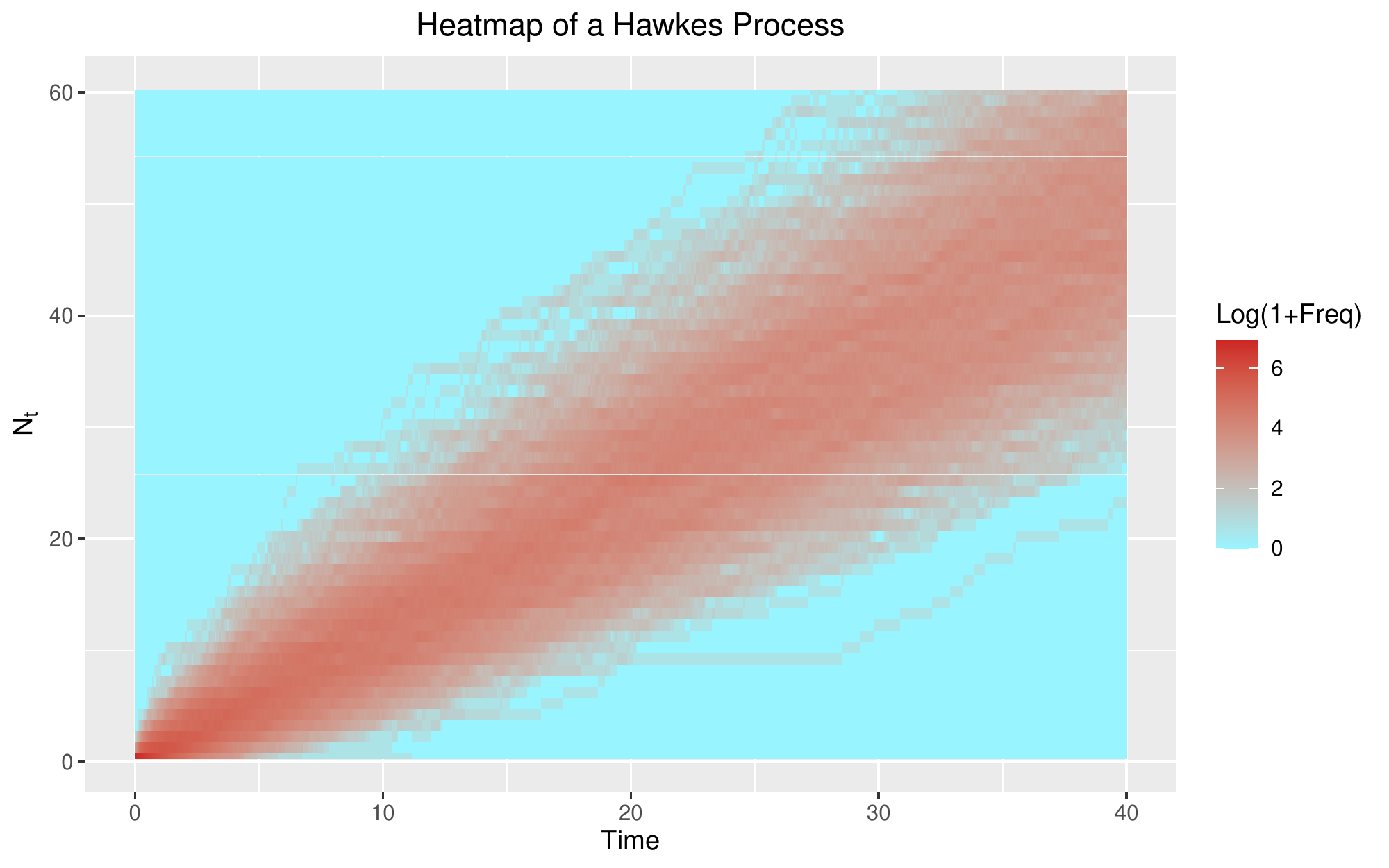}
  \caption{The Hawkes process}
  \label{fig:heatmap:N}
\end{subfigure}%
\begin{subfigure}{.5\textwidth}
  \centering
  \includegraphics[width=.9\linewidth]{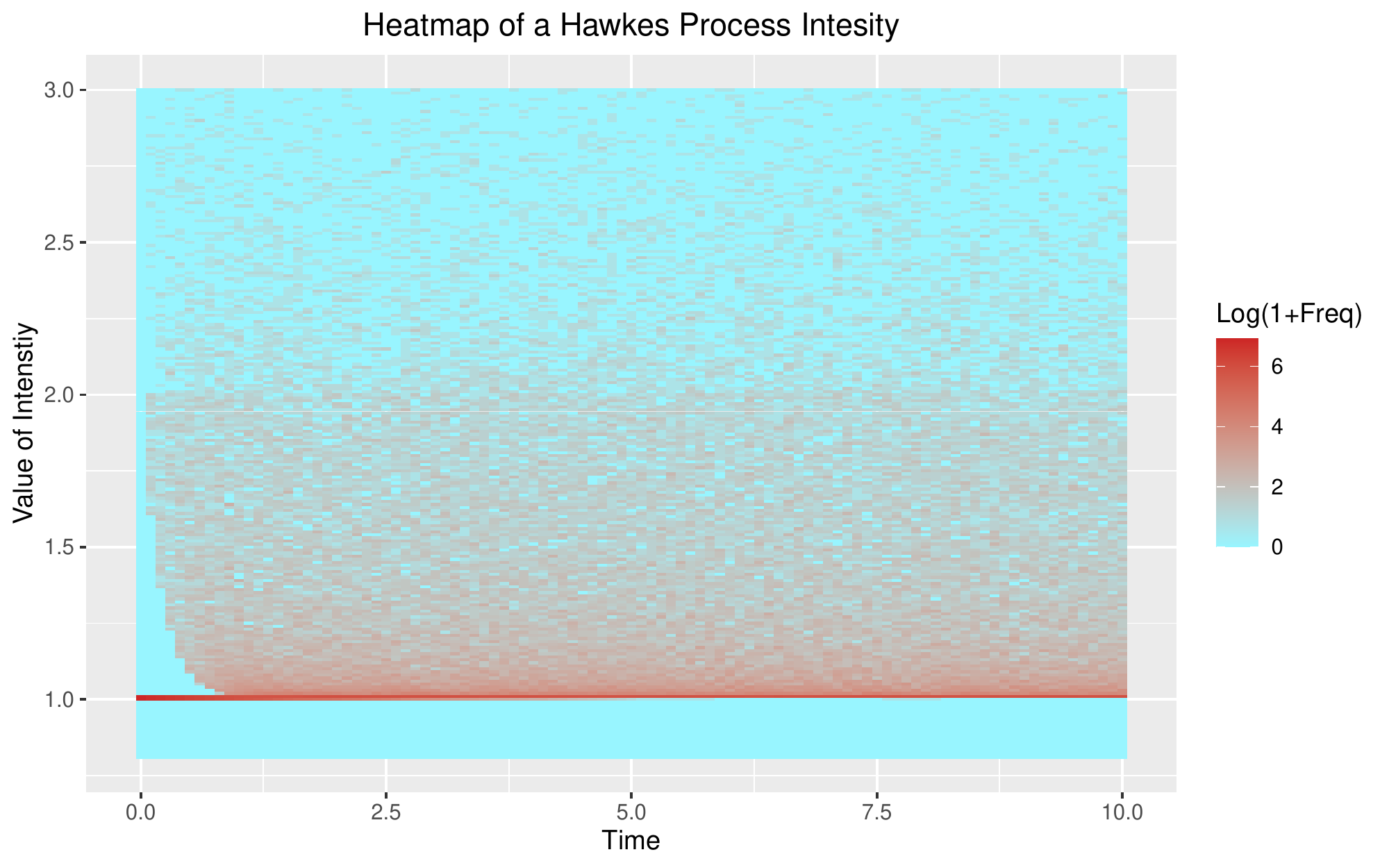}
  \caption{The intensity process}
  \label{fig:heatmap:I}
\end{subfigure}
\caption{Heatmap of the Hawkes process and its intensity. The frequency is given in a logarithmic scale.}
\label{fig:heatmaps}
\end{figure}

\subsection{Linking point processes to Epidemiology}

Epidemic models have been trying to model, understand and predict different features and generalities of the pandemics that humanity has lived with. However, it was not until around 1930 that the first stochastic epidemic model was created. Before that, all the models were deterministic were based on creating different systems of ODEs, but some desirable random effects were missing. Nonetheless, those deterministic models provided the skeleton upon which the behavior of corresponding stochastic systems are built. Furthermore, many times, when populations are large, the limiting behavior of such stochastic systems converge to a corresponding system of ODEs. The main models discussed in this proposal will be variations of the famous stochastic susceptible-infected-recovered (SIR) models and belong to the class of the so-called compartmental models, where each individual is placed into one of the compartments and the individuals move between the compartments under predetermined rules. For a survey on the description of many stochastic models, see \cite{greenwood2009stochastic}.

\sk

The most common stochastic epidemic model, and the building block for many of the more detailed and precise models produced nowadays, is the so-called \textit{general stochastic model},

\begin{itemize}
    \item There are three compartments where an individual can be: Susceptible, Infected and Recovered. Moreover, each individual belongs to one and only one compartment.
    \item The number of susceptible, infected and recovered individuals at time $t$ are denoted by $S_t,I_t,R_t$. Also, the population is constant at a level $N$ and thus, $N=S_t+I_t+R_t$.
    \item A recovered individual cannot contract the disease again.
    \item During the interval $[t,t+\Delta t]$, one of the following things has to happen:
    \begin{enumerate}
        \item A susceptible individual gets infected with probability $\beta\frac{I_t}{N}$ because $I_t/N$ is the likelihood of contacting one of the current number of infected individuals, $I_t$, multiplied by the likelihood $\beta$ of contracting the virus once you got in contact with a person. Since there are $S_t$ susceptible people at that moment,
        \begin{equation}\label{eqn:St} 
        \Px\Big[(S_{t+\Delta t},I_{t+\Delta t})-(S_t,I_t)=(-1,1)\Big]=\beta S_t\frac{I_t}{N}\Delta_t + o(\Delta t)
        \end{equation}
        \item An infected individual recovers from the virus at any time with likelihood $\gamma$. Thus, since there are $I_t$ infected individuals with the same likelihood to recover,
        \begin{equation}\label{eqn:Rt}
        \Px\Big[(S_{t+\Delta t},I_{t+\Delta t})-(S_t,I_t)=(0,-1)\Big]=\gamma I_t\Delta_t + o(\Delta t)
        \end{equation}
        \item Nothing happens and the system remains the same. This is the complementary event to the union of the two actions above,
        \begin{equation}\label{eqn:nothing:SIR}
        \Px\Big[(S_{t+\Delta t},I_{t+\Delta t})-(S_t,I_t)=(0,0)\Big]=1-\left(\beta \frac{S_t}{N} +\gamma\right)I_t\Delta_t +o(\Delta t)
        \end{equation}      
    \end{enumerate}
\end{itemize}

Many of the compartmental models that are available nowadays, to allow for an easy simulation and mathematical tractability, remain within the framework of the Markovian world. As such, those models have deterministic limits and diffusion approximations. The main results and techniques to obtain such limits are detailed in \cite{kurtz1981approximation,meyn}. These limits can be used to estimate and calibrate the models quickly and constantly which can be of great help for public policy makers, since many public health policies might be influenced by predictions of how large an epidemic might be. 

\sk

In a more abstract form, the model (\ref{eqn:St}-\ref{eqn:nothing:SIR}) above can be written in terms of the difference of some Poisson processes and in fact, at their core, many stochastic SIR-type models can be established as functions of Poisson processes or as limits of them, which leads to an immediate question of \textit{whether the point process leading the epidemic model can be generalized to include more features observed empirically in the data}. Besides, there have been few models that try to explore the dynamics of the frequency at which individuals get infected and its effects in the likelihood of future people getting infected. That is, many of the current models either try to elaborate on the number of compartments or in the finesse of the conditions required for a certain individual to transition from one compartment to another but there are few models that try to get a more realistic description of when such transitions occur and how the frequency of such transitions affects the future evolution of the system. 

\sk

Indeed, in most of the current compartmental models such as (\ref{eqn:St}-\ref{eqn:nothing:SIR}), each susceptible individual is equally likely to get infected and that probability is solely dependent on two factors: the proportion of infected individuals and the number of susceptible individuals at time $t$. To understand better paradigm assume the following two scenarios

\begin{itemize}
    \item Assume a population of a 100 individuals and at time 0, there are 5 infected people and 95 susceptible but at time $t=20$, it is observed that $(S_{20},I_{20},R_{20})=(50,30,20)$.
    \item Assume the same population as above with the same initial conditions of $(S_0,I_0,R_0)=(95,5,0)$ but at time $t=5$, it is observed that $(S_{5},I_{5},R_{5})=(50,30,20)$.
\end{itemize}

As it can be seen, in both cases you start with the same number of infected and susceptible people and at at some point in time you also have the state $(50,30,20)$. The difference is that in the first case it took longer to reach that point, whereas in the second, such state was reached much faster. When this happens it might be better to assume that the driving process counting the number of infected people can cluster due to the fact that in periods where there are many infected people, the likelihood of more people to get infected is higher than in periods where few people have been recently infected. In this spirit, various empirical studies have revealed that when an arrival processes shows display \textit{over-dispersion}, i.e. the variance of the number of arrivals in a given interval exceeds the corresponding expected value, the standard assumption of having a Poisson process driving the number of infections is not valid and new models are required. As Figure \ref{fig:overdispersion} suggests, COVID-19 has proven to be a very contagious virus where a high level of over-dispersion could be observed. This implies that rather than having a stochastic epidemic model featuring Poisson processes, a Hawkes point process might be a much more suitable candidate to model it.

\begin{figure}[H]
\centering
		\includegraphics[width=0.83\textwidth]{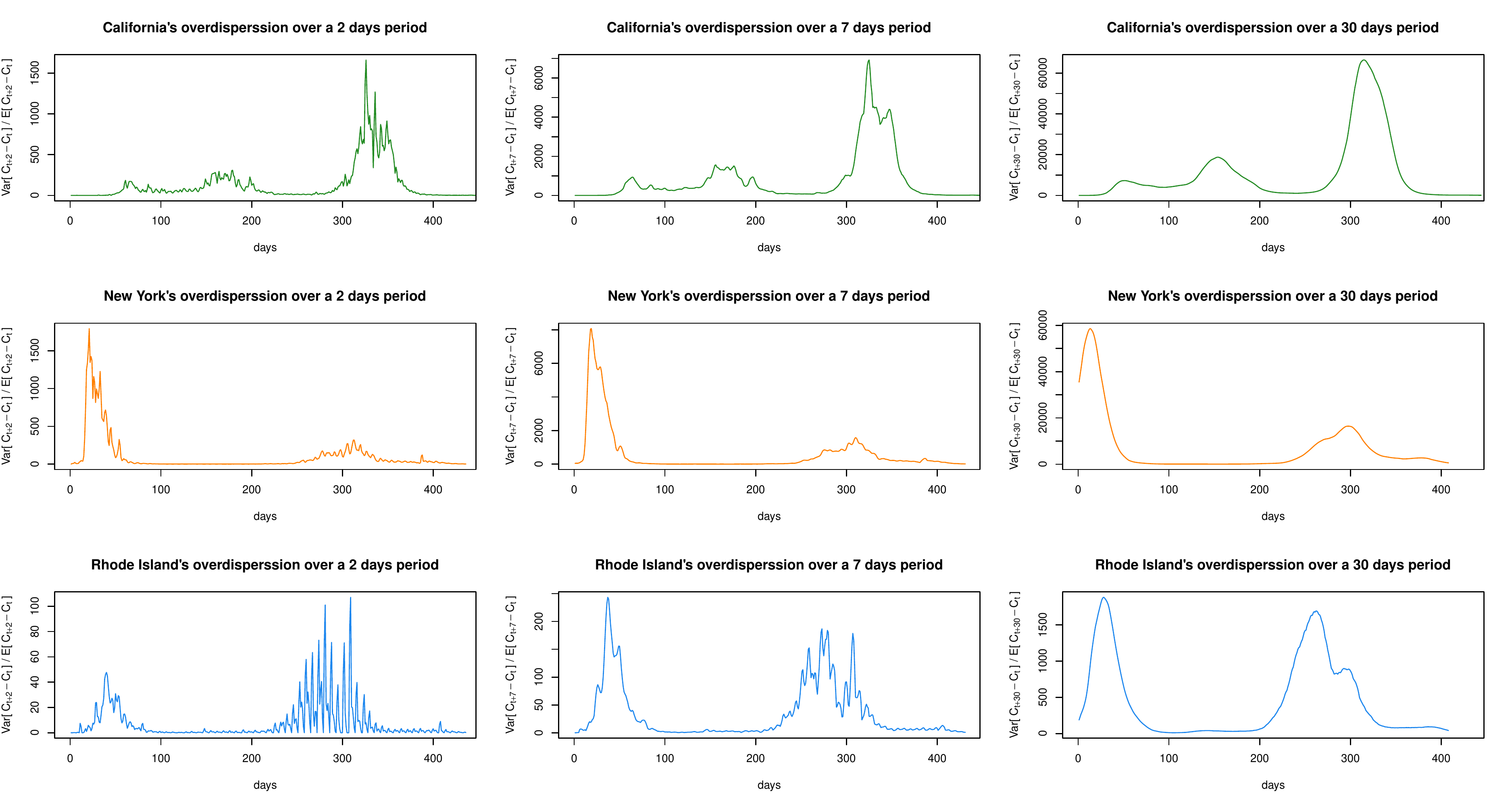}
	\caption{Plots showing the over-dispersion in a 2,7 and 30 days window of the number of newly infected people during COVID-19 in 3 US states: New York, California and Rhode Island. The $x$ axis shows the time in days, starting from February of 2020, while the $y$ axis shows the quotient between the variance and the average of the new cases within that time period. A quotient greatly larger than 1, indicates high over-dispersion in the ``arrival process'' of new infect people.}
	\label{fig:overdispersion}
\end{figure}

Our first step will be the creation of a model that generalizes the Poissonian flow. In contrast to deterministic models, rather than modeling the number of infected people, $I_t$, directly the contagion process $C_t$ (counting the number of infected people up to time $t$) is the process that needs to be modeled. Reason being that $I_t$ can increase (with a new infection) and decrease (with a recovery) but the counting process associated to a point process must be non-decreasing.

\sk

One of the first attempts of incorporating Hawkes processes into an SIR model is provided in \cite{rizoiu2018sir}. In there, the authors explain a relationship between the general stochastic SIR model and a finite population Hawkes-SIR model (they account for the fact that the population is capped at $N$ and therefore $C_t\leq N$), but this model has some limitations. In particular they only explore a model where the \textit{conditional expectation} of the intensity of $C_t$ conditioned on \textit{all} the times at which there is a recovery coincides with a finite Hawkes process with a baseline intensity $\lambda_0(t)=0$. These restrictions cripples the model in the sense that a true Hawkes-SIR model would require the intensity of $C_t$ to be of the form \eqref{eqn:hawkes:int} with $\lambda_0(t)\neq0$ and also not be the result of a conditional expectation on past and future unknown times.

\sk

Other models that have considered an epidemic and relating the total number of infected people with a Hawkes process are presented in \cite{escobar2020hawkes,garetto2021time,chiang2020hawkes} among others. Nonetheless, there is missing a model that becomes a natural extension of an stochastic SIR model. That is, there is no model in the literature where people are divided into the $S,I,R$ compartments and people start arriving to the ``infected'' compartment according to a Hawkes process. Recovered individuals should arrive via a Poisson or renewal process, since contrary to infections, the recovery of a person does not affect how other people recover. One of the aims of this proposal is to create and analyze such model as well as find conditions under which a disease introduced into a community will develop into a large outbreak, and if it does, conditions under which the disease may become endemic.  This condition is linked to the so-called basic reproductive number, $R_0$, defined as the expected number of secondary infective cases per primary cases in a susceptible population. 

\sk

As it can be noted, many of the models that have started to incorporate Hawkes processes are very recent and inspired by the highly infectious rate of COVID-19, making this a novel area to propose, analyze and create models that can shed some light into the question of how fast this disease propagates, hopefully contributing to the decision process on policy-making by providing more accurate estimates when incorporating some of the salient feature of these types of models.

\textbf{Scope and Limitations of the current model:} In its current form, the model presented in this article provides a way to model the overdispersion and high variability observed in the data. However, due to the limiting ability of incorporating a stronger dependency on the current state of the epidemic and the different rates at which different variants have shown to be transmitted, this model is more suitable for keeping track of the pandemic more at the beginning stage when not so many social and political reactions have occurred. To account for this, the author is currently investigating a regime switching alternative. Further, as it is, these type of models should only be used to forecast a short to medium term prediction.

\textbf{Organization of this article:} In section \ref{Sec:Model}, a brief description of the model is presented while in Section \ref{Sec:Quantities} we compute the moments of the process which can be used to predict the medium term dynamics of the current number of infected people given the probabilistic distribution of the quarantined people (which in terms can be used for public policy). Section \ref{section:estimation} provides a brief survey and ideas of how to perform efficient estimation on a state-dependent self-exciting process. However, since estimation is out of the scope of this paper, it will be treated in subsequent research. In Section \ref{Sec:Numerical}, we provide different numerical examples to explore how the different parameters of the model play a role in it and their sensitivity. The conclusions and elements of further research are presented in Section \ref{Sec:Conclusions}. Finally two appendices are also provided. In Appendix \ref{Sec:Proofs} all the proofs to the technical lemmas and theorems of Section \ref{Sec:Quantities} are provided while in Appendix \ref{Sec:algorithms} the pseudo-code of the different algorithms used to simulate the different processes are provided.

\section{A Brief Description of The Model} \label{Sec:Model}

In the model presented in this article, an important assumption will be that the  the population is rather large (technically, it is assumed that the possible number of susceptible people is infinite to facilitate the analysis of the moments of the number of infected people $I_t$). This assumption provides a tractable background for the problem and is not extremely unrealistic given that the model is specifically designed to simulate a pandemic in its early stages. Considering that up May 1st, 2022 only 6.6\% of the population has been reportedly infected\footnote{Information taken from \url{https://covid19.who.int/}} (although it is of general consensus that this number is very underestimated, specially now with the development of at-home test kits), we can think of the number of susceptible individuals as a rather large number and the model will still be accurate. In fact, a precise finite model can be created, but there is little to gain and the closed formulas obtained in Section \ref{Sec:Quantities} become very cumbersome. This compartmental model follows a similar SIR dynamics than the models in \cite{greenwood2009stochastic}, where $S_t, R_t$ and $I_t$ represent the number of susceptible, recovered and infected people at time $t$. Obviously, since, as discussed above, the number of susceptible people is assumed to be infinite, we cannot have a classic relationship such as $S_t+I_t+R_t=N$, but we will rather only focus on the number of infected individuals $I_t$ and the number of recovered individuals $R_t$. 

\sk

The main modeling assumption in this paper will be that the (historical) number of infected people up to time $t$, denoted by $C_t$, will be driven by a (Hawkes-like) counting process with stochastic intensity 
\begin{equation}\label{intensityC}
\Lambda_t=\lambda^\infty(I_t,R_t)+\int_0^t Q_s\phi(s,t)dC_s,
\end{equation}
where $\{Q_n\}_{n=1}^\infty$ is a sequence of i.i.d. random variables independent among themselves and from every other process. As described below, these random variables $Q$ will be interpreted as the level of quarantine that the $n-$th infected person will have, modulating the probability that this individual will produce future infections.

\sk

In this paper we will assume that the baseline intensity $\lambda^\infty(i,\rho)$ of the Hawkes-like process $C_t$ given in Equation \eqref{intensityC} reacts to the state of the epidemics at time $t$. Indeed, we will assume that if $I_t=i$ and $R_t=\rho$. Then, 

\begin{assumptions}\label{assump:baseline}
    \lambda^\infty(i,\rho)=\lambda_0+i\log(\alpha)+\rho\log(\beta),
\end{assumptions}
where $\alpha>1$ and $\beta,\lambda_0$ are positive numbers. 

\sk

The form of the baseline intensity given in assumption \ref{assump:baseline} provides great flexibility to consider multiple situations. We describe some of them next.
	\begin{itemize}
		\item[$\bullet$]  \textsl{A pure Hawkes approach.} In this setting, the intensity takes the form of a classical Hawkes process. This could be a basic model to use for explaining the over-dispersion observed in the data. This can be achieved by setting $\alpha=\beta=1$. In this sense, the baseline intensity will be independent of the number of infected and recovered and the general intensity of the point process will only depend on the cumulative number of infected people and times of infection of these. In this case, the intensity becomes
\begin{equation}\label{eqn:purehawkes}
\Lambda_t=\lambda_0+\int_0^t Q_s\phi(s,t)dC_s.
\end{equation}
This case is considered in \cite{lesage2020hawkes} where the author analyzes, simulates and fits a classical Hawkes process for the total number of infected people in France but to signify a change in public policy modifies the intensity to mimic a lock-down as in China. As another example, in \cite{escobar2020hawkes} where the author discretizes the intensity to analyze and fit the epidemic model to the gathered data of several countries, including Mexico. This model is very popular because there are well established algorithms to simulate and to calibrate the model to existing data.
\end{itemize}

	\begin{itemize}
	\item \textsl{A proportional-to-infections approach}. In this case, the baseline intensity will be  proportional to the current number of infected people people. Here, not only can over-dispersion be captured but the intensity, and thus the likelihood, of observing a new infection
will increase or decrease according to current number of infected people. This can be achieved by setting $\lambda_0 = 0$ and $\beta = 1$ and the intensity becomes
\begin{equation}\label{eqn:proptoI}
\Lambda_t=\lambda_0+i\log(\alpha) + \int_0^t Q_s\phi(s,t)dC_s,
\end{equation}
	\item \textsl{A state dependent Hawkes baseline intensity}. This is the more general case and the one that will be considered in this paper. In this case, assumption \ref{assump:baseline} can be used to focus on different aspects of the model depending on whether $0<\beta<1$, $\beta=1$ or if $\beta>1$. Indeed, if $0<\beta<1$, then the quantity $\rho\log(\beta)$ is negative and in this case we can think of a situation where the recovered individuals suppress the epidemics. This is more in line with the classical assumptions of an SIR model where by increasing the number of recovered people the infectivity of the virus decays and new cases are less likely. However, this behaviour hinges on a critical assumption: recovered people cannot contribute more to the epidemics and that they become isolated from it. This assumption is not completely true for the COVID-19 epidemics where many recovered people become reinfected and can contribute to the spread of the disease. One should be careful here though, if the number of recovered individuals become very large and $\beta<1$, the baseline intensity $\lambda^\infty(i,\rho)$ might become negative and this is not permitted. Several solutions could be considered at this point, the two more relevant being to stop the process before $\lambda^\infty(i,\rho)<0$ or change the form Assumption \ref{assump:baseline} to $\lambda^\infty(i,\rho)=\bigg(\lambda_0+i\log(\alpha)+\rho\log(\beta)\bigg)_+$, where $f(x)=(x)_+$ represents the positive part of $x$, being this latter option a very interesting case left out for further research. If $\beta=1$, then the number of recovered people does not affect the model and the process $R_t$ becomes a homogenous Poisson process independent of $C_t$. Finally, whenever $\beta>1$ the recoveries can also contribute to the infection of new individuals (rather than suppressing it). As discussed above, in the current pandemic many individuals can become reinfected and keep transmitting the virus to to other people. This feature might comes in handy specifically at the current situation with COVID-19 as universities and other workplaces treat recovered individuals as non-infectious but indeed they might become infectious again but at a different rate than that of susceptible or currently infected individuals.
\end{itemize}

It is also important to determine the parameters of the model:

\begin{itemize}
	\item The baseline infective rate $\lambda_0$, which is associated to the likelihood that someone new gets infected due to exogenous factors to the model such as migrations, population dynamics, etc. This is not affected by the amount of infected or recovered people.
	\item The infective rate $\alpha>1$. This parameter (or more precisely, the logarithm of it) measures how the likelihood at time $t$ of a new infection will increase due to the number of infected people at that particular instant regardless of the total number of people that has been infected by the virus. The higher $\alpha$, the more likely a new infection will occur at the level of current infections $I_t$.
	\item The situational rate $\beta>1$. This parameter (or more precisely, the logarithm of it) measures how the likelihood at time $t$ of a new infection will increase or decrease due to the number of recovered people \textsl{up to} that particular instant. As $\beta\to0$ the likelihood that a new infection will occur at the level of current of $R_t$ decreases while if $\beta\to\infty$ the likelihood will increase.
	\item The mean recovery time $\mu$. This parameter is the expected recovery time period from the infection. There are several studies about this quantity and it depends on several factors, but according to \cite{chowdhury2021comparative} a good approximation could be around 9 days. 
	\item The historical influence parameter $r$ (also called the reversion coefficient $r$). This parameter will measure the influence of previous (historical) infections in the arrival of a new infection and indicates the speed at which the likelihood of a new infection decays to the current level of the baseline intensity on the absence of a new infection. In fact, for out model, the increase in the likelihood of a new infection due to a previous infection at time $s<t$ is proportional to $e^{-r(t-s)}$. This means that more recent an infection, the more it will contribute to the likelihood of a further infection. This behavior is the key difference with respect to the classical epidemiological models and what allows for the clustering and \textsl{over-disperssion} observed in the data for the current pandemic.	
	\item The probability distribution of the ``quarantine effect'' $Q$. This random variable specifies the level of ``quarantine'' each individual will have. Indeed, our model specifies some stochastic dynamics where the probability of a new infection at time $t$ is driven by the amount of people that has been infected up to that time and how recent their infections have been. This quarantine factor will basically determine the proportion at which infected individual person will contribute to a new infection. From the modeling perspective, this random variable can be thought as a measure of quarantining. The lower the (random variable) $Q$, the lower the contribution of such infected individual to a new infection. This random variable can be discrete or continuous, but its support has to be over the positive numbers. In other words, an infected individual must have a positive contribution to the general likelihood of generating a new infection even if it is small That is, there cannot be a ``perfect quarantine'', which is consistent because people have to go to the groceries or buy basic services and even interact with delivery services by receiving goods at home. For this work we will impose the mild restriction that the Moment Generating Function (MGF) of $Q$ exists on a neighbourhood of 0.
	
	\sk
	
	\begin{rem}
	It is also known (see Chapter 3.3 in \cite{laub2022elements}) than when the random variable $Q$ is constant, and the intensity of the point process is the classical Hawkes process in Equation \ref{eqn:hawkes:int} with $\phi(t)=Q e^{-r t}$, then the \textsl{branching ratio} of the process would be given by 
	\[
	n=\frac{Q}{r}
	\]
	where as above, $r$ is the reversion coefficient. In the SIR process where the number total (historical) number of people that has been infected by the virus is driven by a Hawkes process the branching ration $n$ has the interpretation that when $0<n<1$, it becomes the ratio of the number of people that one individual will infect relative to the entire population; that is, it is related to the epidemiological basic reproduction number $\mathcal{R}_0$.
	\end{rem}
	\sk
\end{itemize}

%

\begin{rem}
Assumption (\ref{assump:baseline}) specifies the form of the so-called baseline intensity. This quantity will not depend on the past number of infections but solely on the present number of infected and recovered individuals. In fact, this baseline intensity remains constant between events, that is, between new infections or recoveries, the probability of a new infection is exponentially distributed with rate $\lambda^\infty(i,\rho)$ and thus the process dictating the arrival of a new infection is equal in distribution to a homogeneous Poisson process with rate $\lambda^\infty(i,\rho)$. 

\sk

 Also, notice that $\alpha$ and $\beta$ represent the factors that drives the baseline intensity according to the state of the system. The more infected people there is at the moment, the more likely a new infection will occur, and the more recovered individuals there are in the present moment, the less likely a new infection will occur.
\end{rem}

Further, since we don't want a ``degenerate'' Hawkes process, we will assume that $Q$ does not have an atom at 0. That is,
\begin{assumptions}\label{assump:Q}
\Px[Q\leq 0]=0
\end{assumptions}

\section{Derivation of the Moments of the Number of Infected individuals $I_t$} \label{Sec:Quantities}

As mentioned in the previous section, one of the features that the epidemic model presented her has is that every person is assumed to have a random level of quarantine, whose law is given by $Q$. As an easy example, assume that $supp(Q)=\{0.25,0.75,1\}$. In this case, $Q=0.25$ will imply a higher level of quarantine (contributing less to new infections) while $Q=1$ would mean a low level of quarantine (so that this person will increase the likelihood of a new infection happening). Obviously, $Q=0.5$ would be an intermediate case. 

To compare how this different levels of quarantine and other parameters affect the model, it might be worth to look at the average behaviour of $I_t$ and $R_t$. This is because depending on Var$[I_t]$, some comparisons might not be depicted accurately by plotting some trajectories of the process.

Recall from Equation \eqref{intensityC} that the total number of infected people up to time $t$, denoted by $C_t$, is a determined by a counting process with stochastic intensity given by 
\begin{equation}
\Lambda_t=\lambda^\infty(I_t,R_t)+\int_0^t Q_s\phi(s,t)dC_s,
\end{equation}
where $\{Q_n\}_{n=1}^\infty$ is a sequence of i.i.d. random variables independent among themselves and from every other process denoting the level of quarantine that the $n-$th infected person will have, modulating the probability that this individual will produce future infections.

We are interested in computing the generating function of the triplet $(I_t,R_t,\lambda(t))$
\[
\Ex\left[z^{I_t}w^{R_t}e^{-s\lambda_t}\right]
\]
Here, we will assume that all the stochastic processes are Markovian, which can be attained if
\begin{itemize}
    \item The recovery time is exponentially distributed with rate parameter $\mu$. That is, if $\tau$ is the recovery time for an infected individual, then 
\begin{assumptions}\label{assump:tau}
    \Px[\tau\leq t]=1-e^{-\mu t}.
\end{assumptions}
This implies in particular that if $I_t=k$, then 
\begin{equation}\label{intensity:R}
\Px[R_{t+\Dt}-R_t=1]=\Px[\min\{\tau_1,\ldots,\tau_k\}\leq \Dt]=1-e^{k\mu\Dt}=k\mu\Dt+o(\Dt)
\end{equation}
    \item The self-exciting kernel $\phi(\sbt)$ is an exponential function. That is,
  \begin{assumptions}\label{assump:phi}
    \phi(s,t)=e^{-r(t-s)}
\end{assumptions}  
\end{itemize}

These assumptions can be relaxed, but the analytical tractability will be lost and different techniques would have to be employed. The analysis presented here is inspired by the one presented in \cite{koops2018} but here is generalized to allow a state dependent baseline intensity.

In order to compute the joint distribution of $(I_t,R_t,\Lambda(t))$, we need to solve a system of differential equations presented next.

\begin{thm} \label{thm:pde:f}
Let $F(t,i,\rho,\lambda)=\Px[I_t=i,R_t=\rho,\Lambda_t\leq \lambda]$ and $f(t,i,\rho,\lambda)=\dfrac{\partial F}{\partial \lambda}(t,i,\rho,\lambda)$. Then, under assumptions (\ref{assump:tau})-(\ref{assump:phi}), $f(t,i,\rho,\lambda)$ satisfy the Partial difference differential equation (PDDE):
\begin{equation}\label{pde:f}
\begin{aligned}
\dfrac{\partial }{\partial t}f(t,i,\rho,\lambda)-\frac{\partial}{\partial \lambda}(r\lambda& f(t,i,\rho,\lambda))+r\lambda^\infty(i,\rho)\frac{\partial}{\partial \lambda} f(t,i,\rho,\lambda)=\\& 
\int_0^{\lambda-\Delta\lambda^{\infty,1}_{i,\rho}}yf(t,i-1,\rho,y)\frac{\partial}{\partial \lambda}\Px[Q\leq \lambda-\Delta\lambda^{\infty,1}_{i,\rho}-y]dy+\\
 & (i+1)\mu f(t,i+1,\rho-1,\lambda-\Delta\lambda^{\infty,2}_{i,\rho})-(i\mu+\lambda)f(t,i,\rho,\lambda)
\end{aligned}
\end{equation}
where
	\[
	\Delta\lambda^{\infty,1}_{i,\rho}=\lambda^\infty(i,\rho)-\lambda^\infty(i-1,\rho) \qquad\qquad\text{and}\qquad\qquad	\Delta\lambda^{\infty,2}_{i,\rho}=\lambda^\infty(i,\rho)-\lambda^\infty(i+1,\rho-1)
	\]
\end{thm}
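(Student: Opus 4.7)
The plan is to exploit the fact that, under Assumptions (\ref{assump:tau})--(\ref{assump:phi}), the triple $(I_t, R_t, \Lambda_t)$ is a piecewise-deterministic Markov process, and then derive the Kolmogorov forward equation for its joint law via the usual infinitesimal increment argument. Between events $\Lambda_t$ evolves deterministically via $d\Lambda_t = -r(\Lambda_t - \lambda^\infty(I_t, R_t))\,dt$ (the exponential kernel decays toward the state-dependent floor), while jumps of three types occur: an infection at rate $\Lambda_t$ sends $(i,\rho,\lambda)\mapsto(i+1,\rho,\lambda+Q+\Delta\lambda^{\infty,1}_{i+1,\rho})$ with $Q$ independent of everything else, and a recovery at rate $i\mu$ sends $(i,\rho,\lambda)\mapsto(i-1,\rho+1,\lambda-\Delta\lambda^{\infty,2}_{i-1,\rho+1})$. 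Reversing these transitions to enumerate the ways of arriving at $(i,\rho,\lambda)$ at time $t+\Delta t$ gives three contributions plus $o(\Delta t)$.

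First I would handle the no-event case. Given $(I_t,R_t) = (i,\rho)$, the pre-image of $\lambda$ under the deterministic flow is $\lambda' = \lambda^\infty(i,\rho) + (\lambda - \lambda^\infty(i,\rho))e^{r\Delta t}$, and the Jacobian of the inverse flow acting on densities is $e^{r\Delta t}$. The probability of no event on $[t,t+\Delta t]$ is $1 - (\lambda' + i\mu)\Delta t + o(\Delta t)$, so this contribution equals $f(t,i,\rho,\lambda')\,e^{r\Delta t}\,(1 - (\lambda' + i\mu)\Delta t)$. Expanding $\lambda'$ and $e^{r\Delta t}$ to first order in $\Delta t$ and recognising $rf + r\lambda\,\partial_\lambda f = \partial_\lambda(r\lambda f)$, this yields in the limit the drift operator $\partial_\lambda(r\lambda f) - r\lambda^\infty \partial_\lambda f$ together with the killing term $-(\lambda + i\mu)f$; once transposed, this accounts for the left-hand side of (\ref{pde:f}) and the $-(i\mu + \lambda)f$ term on the right.

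Next I would handle the jump contributions. For infections into $(i,\rho)$, the pre-jump state is $(i-1,\rho,y)$ and the transition probability over $\Delta t$ is $y\,\Delta t + o(\Delta t)$; the post-jump intensity is $y + Q + \Delta\lambda^{\infty,1}_{i,\rho}$, which equals $\lambda$ precisely when $Q = \lambda - y - \Delta\lambda^{\infty,1}_{i,\rho}$. Since Assumption (\ref{assump:Q}) enforces $Q > 0$ almost surely, $y$ is restricted to $[0, \lambda - \Delta\lambda^{\infty,1}_{i,\rho}]$, yielding the stated convolution once we write the conditional density of $Q$ as $\partial_\lambda \Px[Q \leq \lambda - \Delta\lambda^{\infty,1}_{i,\rho} - y]$. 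For recoveries into $(i,\rho)$, the pre-jump state is uniquely $(i+1, \rho-1, \lambda - \Delta\lambda^{\infty,2}_{i,\rho})$ with rate $(i+1)\mu$, producing $(i+1)\mu\, f(t, i+1, \rho-1, \lambda - \Delta\lambda^{\infty,2}_{i,\rho})$. Summing the three contributions, subtracting $f(t,i,\rho,\lambda)$, dividing by $\Delta t$, and letting $\Delta t\to 0$ reproduces (\ref{pde:f}).

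The main obstacle is the careful bookkeeping of the deterministic drift: one must simultaneously shift the argument of $f$ along the characteristic flow of $\Lambda_t$ and include the Jacobian $e^{r\Delta t}$ of that flow, then recognise the combination $r(\lambda - \lambda^\infty)\,\partial_\lambda f + r f$ as the divergence $\partial_\lambda(r\lambda f) - r\lambda^\infty \partial_\lambda f$. The remaining work is sign-tracking in the $\Delta\lambda^{\infty,j}_{i,\rho}$ notation and ensuring the convolution integral's upper limit correctly encodes $Q > 0$. Enough smoothness of $f$ in $\lambda$ to justify the pointwise PDDE can be argued separately from the PDMP structure (e.g., from the smoothness of the $Q$-density inherited from its MGF hypothesis, propagated through the piecewise-linear drift and jumps), but this regularity is tacit in the statement.
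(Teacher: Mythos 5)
Your proposal is correct and follows essentially the same route as the paper: an infinitesimal Chapman--Kolmogorov decomposition of the ways the state $(i,\rho,\lambda)$ can be reached at time $t+\Delta t$, with the exponential decay of $\Lambda_t$ toward the state-dependent baseline handled along the characteristic flow, the infection in-flow convolved against the law of $Q$ on $[0,\lambda-\Delta\lambda^{\infty,1}_{i,\rho}]$, and the recovery in-flow evaluated at the shifted intensity $\lambda-\Delta\lambda^{\infty,2}_{i,\rho}$. The only differences are cosmetic: the paper works at the level of the CDF $F$ (so no flow Jacobian is needed, and the simultaneous infection-plus-recovery event is explicitly shown to contribute $o(\Delta t)$) and differentiates in $\lambda$ only at the end, whereas you work directly with the density $f$ carrying the Jacobian $e^{r\Delta t}$; note also that your forward description of the recovery jump has a sign slip (the post-jump intensity is $\lambda+\Delta\lambda^{\infty,2}_{i-1,\rho+1}$, not $\lambda-\Delta\lambda^{\infty,2}_{i-1,\rho+1}$), although the reversed transition you actually use in the derivation is the correct one.
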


The next objective is to compute the generating function of the triple $(I_t,R_t,\Lambda(t))$, which can then be used to compute the moments and other quantities of interest. However, since $I_t$ and $R_t$ are discrete and $\Lambda(t)$ is continuous, we will have to compute the transformations of those random variables separately.

\begin{thm} \label{thm:pde1} Let $\varphi(t,z,w,s)=\Ex\left[z^{I_t}w^{R_t}e^{-s\Lambda(t)}\right]$ for $|z|\leq 1$ and $|w|\leq 1$. Then, $\varphi:=\varphi(t,z,w,s)$  satisfies the PDE
\begin{equation}\label{pde:Gen:Fun}
\begin{aligned}
\dfrac{\partial }{\partial t}\varphi+\left[rs-1+\left(\frac{1}{\alpha}\right)^sM_Q(s)z\right] \frac{\partial}{\partial s}\varphi&+\left[\left(\mu+rs\log\left(\alpha\right)\right)z-\mu\left(\frac{\alpha}{\beta}\right)^sw\right]\frac{\partial}{\partial z}\varphi\\[.3cm]
&+\bigg[\log(\beta)rsw\bigg]\frac{\partial}{\partial w}\varphi=
-\lambda_0rs
\varphi
\end{aligned}
\end{equation}
with initial condition
\[
\varphi(0,z,w,s)=z^{i_0}e^{-s(\lambda_0+i_0\log(\alpha))}
\]
\end{thm}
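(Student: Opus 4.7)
The plan is to apply the triple transform $\sum_{i\geq 0} z^i \sum_{\rho\geq 0} w^\rho \int_0^\infty e^{-s\lambda}(\cdot)\,d\lambda$ to every term of the PDDE from Theorem \ref{thm:pde:f}, and then identify what comes out with derivatives of $\varphi$. The two workhorses are $\int_0^\infty \lambda e^{-s\lambda} f\,d\lambda = -\partial_s \Phi$, where $\Phi(t,i,\rho,s)$ denotes the Laplace transform of $f$ in $\lambda$ alone, together with $\sum_i i z^i \Phi = z\,\partial_z\varphi$ (analogously for $\rho$), which convert $\lambda$-weights into $s$-derivatives and factors of $i,\rho$ into $z\partial_z, w\partial_w$. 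On the left-hand side, $\partial_t f$ transforms into $\partial_t\varphi$; one integration by parts on $-\partial_\lambda(r\lambda f)$ yields $rs\,\partial_s\varphi$ after summation (the boundary terms vanish because $f(t,i,\rho,\lambda)=0$ for $\lambda<\lambda^\infty(i,\rho)$ and $f$ decays at infinity); and integrating by parts the $r\lambda^\infty(i,\rho)\partial_\lambda f$ term and substituting the explicit form $\lambda^\infty(i,\rho)=\lambda_0+i\log\alpha+\rho\log\beta$ from (\ref{assump:baseline}) produces $rs[\lambda_0\varphi + \log(\alpha)z\,\partial_z\varphi + \log(\beta)w\,\partial_w\varphi]$.

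The right-hand side is handled piece by piece. Using (\ref{assump:baseline}) the two shifts simplify to $\Delta\lambda^{\infty,1}_{i,\rho}=\log\alpha$ and $\Delta\lambda^{\infty,2}_{i,\rho}=\log(\beta/\alpha)$, so the exponential factors that arise from the Laplace transform of a shifted argument are $(1/\alpha)^s$ and $(\alpha/\beta)^s$, respectively. The first RHS term is the delicate one: after the change of variables $\lambda'=\lambda-\log\alpha$ the inner integral becomes a genuine convolution on $[0,\infty)$ of $y\,f(t,i-1,\rho,y)$ with the density of $Q$, so its Laplace transform factors as $(-\partial_s \Phi(t,i-1,\rho,s))\cdot M_Q(s)$; the generating-function sum with the shift $i\mapsto i-1$ then pulls out a $z$, yielding $-(1/\alpha)^s M_Q(s)\,z\,\partial_s\varphi$. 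The second RHS term, with its $(i+1)\mu$ prefactor, becomes $\mu(\alpha/\beta)^s w\,\partial_z\varphi$ after the reindexings $j=i+1$, $\rho'=\rho-1$ (the $w$ coming from the $w^{\rho'+1}$ exponent). The third RHS term $-(i\mu+\lambda)f$ transforms termwise to $-\mu z\,\partial_z\varphi + \partial_s\varphi$.

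Equating the transformed LHS and RHS and grouping by derivative of $\varphi$ reproduces (\ref{pde:Gen:Fun}) exactly. The initial condition is immediate from $I_0=i_0$ and $R_0=0$ being deterministic together with $\Lambda_0 = \lambda^\infty(i_0,0) = \lambda_0+i_0\log\alpha$, giving $\varphi(0,z,w,s)=z^{i_0}e^{-s(\lambda_0+i_0\log\alpha)}$. The main obstacle is the convolution term on the RHS: one must carefully track the $\lambda$-shift, recognize the convolution structure after the change of variables, and then correctly pair the $y$-weight inside the integrand (which becomes $-\partial_s$) with the subsequent index shift (which brings out the factor $z$) so that the two effects combine cleanly into $z\,\partial_s\varphi$ rather than into $\partial_s(z\varphi)$. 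A secondary but routine verification is that every boundary term coming from integration by parts vanishes; this follows from the support property $f\equiv 0$ for $\lambda<\lambda^\infty(i,\rho)>0$ together with integrability of $f$ in $\lambda$, which also justifies interchanging the sums in $i,\rho$ with the integral in $\lambda$ throughout.
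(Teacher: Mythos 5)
Your proposal is correct and follows essentially the same route as the paper: apply the Laplace transform in $\lambda$ together with the generating-function sum in $i,\rho$ to the PDDE of Theorem \ref{thm:pde:f}, handle the shifted convolution term via $\Delta\lambda^{\infty,1}_{i,\rho}=\log\alpha$, $\Delta\lambda^{\infty,2}_{i,\rho}=\log(\beta/\alpha)$ and the factorization $(-\partial_s)\cdot M_Q(s)$, and use the index-shift identities to convert factors of $i$, $\rho$, $i+1$ into $z\partial_z$, $w\partial_w$, $w\partial_z$. The paper merely splits the transform into two stages (first defining $\psi(t,i,\rho,s)$, then summing), which is an organizational rather than a substantive difference; all your term-by-term contributions, signs, and the initial condition match.
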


Since the resulting PDE is linear of first order, we can apply the method of characteristics to simplify the problem to a system of 2 by 2 ODEs. However, the solution can be quite messy and numerical methods are very likely to be needed to solve such system which we just point out for the sake of completeness.

\begin{cor} Let $(z,w,s)$ be a fixed point in $[0,\infty)^3$ and let $\Xi(t):=\varphi(t,z,w,s)=\Ex\left[z^{I_t}w^{R_t}e^{-s\Lambda(t)}\right]$. Then, $\Xi(t)$ is the solution of the system of ODEs

\small
\begin{equation}\label{Sys:ODE:Gen:Fun}
\begin{aligned}
\left\{\begin{array}{ccll}
\dfrac{d\vartheta}{dt}&=&-1+r\vartheta(t)+\left(\frac{1}{\alpha}\right)^{\vartheta(t)}M_Q(\vartheta(t))\bigg[&\hspace{-.25cm}z\exp\left(\mu t+r\log\left(\alpha\right)\displaystyle\int_0^t \vartheta(u)du\right)\\[.4cm]
&&&-w\mu\displaystyle\int_0^t\left(\frac{\alpha}{\beta}\right)^{\vartheta(u)}\exp\left(-\mu u+r\log(\beta)\displaystyle\int_0^u \vartheta(y)dy\right)du\bigg]\\[.4cm]
\dfrac{d\Xi}{dt}&=& -\lambda_0r\vartheta(t)\span \\[.4cm]
\vartheta(0)&=&s&\\[.5cm]
\Xi(0)&=&z^{i_0}\exp\left(-\vartheta(\lambda_0+i_0\log(\alpha))\right)&
\end{array}\right.
\end{aligned}
\end{equation}
\normalsize

\end{cor}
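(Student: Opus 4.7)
Since \eqref{pde:Gen:Fun} is a linear first-order PDE for $\varphi(t,z,w,s)$, the natural route is the method of characteristics. Reading off the coefficients of $\partial_s, \partial_z, \partial_w$ and parametrising the characteristic curves by $t$, I obtain the characteristic ODE system
\[
\dot{s}(t) = r\, s(t) - 1 + \left(\tfrac{1}{\alpha}\right)^{s(t)} M_Q(s(t))\, z(t), \qquad \dot{w}(t) = r\, s(t)\log(\beta)\, w(t),
\]
\[
\dot{z}(t) = \bigl(\mu + r s(t)\log\alpha\bigr) z(t) - \mu\left(\tfrac{\alpha}{\beta}\right)^{s(t)} w(t),
\]
together with the reduced transport equation $\dot\varphi(t) = -\lambda_0\, r\, s(t)\, \varphi(t)$. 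Prescribing the characteristic initial data $(s(0), z(0), w(0)) = (s, z, w)$ so that $\Xi(t) := \varphi$ along that curve agrees with the target quantity, the initial condition supplied by Theorem \ref{thm:pde1} yields $\Xi(0) = z^{i_0} \exp\bigl(-s(\lambda_0 + i_0 \log\alpha)\bigr)$, matching the stated initial value (using $\vartheta(0) = s$).

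The characteristic subsystem has a convenient triangular structure once $\vartheta(t) := s(t)$ is treated as known. First, the $w$-equation is linear with coefficient depending only on $\vartheta$, so direct integration gives $w(t) = w \exp\bigl(r\log(\beta) \int_0^t \vartheta(u)\,du\bigr)$. Next, the $z$-equation becomes a linear inhomogeneous ODE driven by $\vartheta$ and the closed form for $w(t)$, which I would solve by variation of parameters using the integrating factor $\exp\bigl(-\mu t - r\log(\alpha)\int_0^t \vartheta(u)\,du\bigr)$. Collecting the resulting exponential factors yields the explicit expression for $z(t)$ as the bracketed term appearing in the first equation of \eqref{Sys:ODE:Gen:Fun}. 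Plugging these closed forms for $z(t)$ and $w(t)$ back into the $\dot s$-equation produces the stated integro-differential equation for $\vartheta$, while the reduced transport equation becomes the corresponding ODE for $\Xi$ along the same curve.

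The principal technical hazard I anticipate is the exponent bookkeeping in the variation-of-parameters step: the resulting integrand mixes factors $(\alpha/\beta)^{\vartheta(u)}$ with exponentials whose arguments combine $r\log\alpha$ and $r\log\beta$ contributions weighted against $\int_0^t \vartheta$ and $\int_0^u \vartheta$, and these must be reorganised carefully in order to recover exactly the form displayed in \eqref{Sys:ODE:Gen:Fun}. Once that simplification is performed cleanly, the rest of the argument is routine substitution and matching of the initial values at $t = 0$, both of which follow immediately from Theorem \ref{thm:pde1}.
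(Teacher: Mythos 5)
Your proposal is correct and is precisely the method-of-characteristics argument the paper invokes for this corollary (the paper itself supplies no written proof beyond the one-line remark preceding the statement, so your sketch is a faithful expansion of the intended route: read off the characteristic ODEs for $s,z,w$, integrate the triangular $w$- and $z$-subsystems in closed form, and substitute back into the $\dot s$-equation and the transport equation). One small point in your favour: your reduced transport equation $\dot\varphi=-\lambda_0 r\,\vartheta(t)\,\varphi$ is what the PDE actually yields along characteristics, whereas the displayed system in the corollary drops the factor $\Xi$ on the right-hand side.
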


Our objective is to be able to compute the moments of the random variables $I_t$ and $R_t$, however the system of ODEs (\ref{Sys:ODE:Gen:Fun}) might not prove that useful for this task, therefore we turn our attention to the PDE (\ref{pde:Gen:Fun}) again. In order to obtain the joint $(i,j,k)-$th moment of $I_t$, $R_t$ and $\Lambda(t)$ we take the $i-$th derivative of the PDE (\ref{pde:Gen:Fun}) with respect to $z$, the $j-$th derivative with respect to $w$ and the $k-$th derivative with respect to $s$ plug in the values $z=w=1$ and $s=0$. However, for simplicity, we will proceed to give an explicit formula for the first and second moments of the processes $I_t,R_t,\Lambda(t)$ since this are the ones that might be used the most.

Unfortunately, the moments of $I_t$ and $R_t$ are not independent of those of $\Lambda(t)$, therefore to compute the first moment, we will need to solve a system of 3 linear differential equations. Each equation will be obtained by differentiating the PDE (\ref{pde:Gen:Fun}) with respect to one parameter $z,w$ or $s$ and plugging the values mentioned before. We show the procedure in the next two lemmas.

\begin{lemma} \label{lemma:firtsmom} For any time $t>0$, let $\Xt=\big[\Ex[\Lambda(t)],\Ex[I_t],\Ex[R_t]\big]^T$. Then, $\Xt$ is the solution to the system of differential equations
\begin{align}\label{system:ODE:firstmoment}
\dfrac{d}{dt}\Xt&=\mathbf{A}_1\Xt + \mathbf{b}_1\\[.4cm] \label{system:ODE:first:initial}
\overline{\mathbf{X}}_0&=\left[\begin{array}{c}\lambda_0+i_0\log(\alpha)\\i_0\\0\end{array}\right]
\end{align}
where   
\[
\mathbf{A}_1=\left[\begin{array}{ccc} \log\left(\alpha\right)+\Ex[Q]-r & (r-\mu)\log\left(\alpha\right)+\mu\log\left(\beta\right)& r\log(\beta)\\ 1 &-\mu&0\\ 0&\mu&0\end{array}\right]\qquad\text{and}\qquad\mathbf{b}_1=\left[\begin{array}{c} \lambda_0r \\ 0\\ 0\end{array}\right]
\]
That is,
\begin{equation}\label{firstmoments}
\Xt=\overline{\mathbf{X}}_0e^{\mathbf{A}_1 t} + \int_0^te^{\mathbf{A}_1(t-s)}\mathbf{b}_1ds
\end{equation}
\end{lemma}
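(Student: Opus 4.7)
The plan is to derive the ODE system \eqref{system:ODE:firstmoment} by differentiating the PDE \eqref{pde:Gen:Fun} for the joint generating function $\varphi(t,z,w,s) = \Ex[z^{I_t} w^{R_t} e^{-s\Lambda_t}]$ once in each of $z$, $w$, and $s$, and evaluating at the point $(z,w,s) = (1,1,0)$. Since $\varphi(t,1,1,0) \equiv 1$, the three first partial derivatives recover the moments of interest: $\partial_z\varphi|_{(1,1,0)} = \Ex[I_t]$, $\partial_w\varphi|_{(1,1,0)} = \Ex[R_t]$, and $\partial_s\varphi|_{(1,1,0)} = -\Ex[\Lambda_t]$.

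The key observation that forces a \emph{closed} first-order system, rather than coupling to second moments, is the following. Writing \eqref{pde:Gen:Fun} schematically as $\partial_t\varphi + A\,\partial_s\varphi + B\,\partial_z\varphi + C\,\partial_w\varphi = D\,\varphi$, each of the four coefficients $A$, $B$, $C$, $D$ vanishes at the evaluation point: $A|_{(1,1,0)} = -1 + M_Q(0) = 0$, $B|_{(1,1,0)} = \mu - \mu = 0$, and $C, D$ vanish trivially. Consequently, after differentiating the PDE once and applying Leibniz, every term containing a second-order partial derivative of $\varphi$ is multiplied by zero upon evaluation at $(1,1,0)$, and only first derivatives of $\varphi$ survive.

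The computation then reduces to evaluating the first-order partial derivatives of $A$, $B$, $C$, $D$ at $(1,1,0)$ and assembling them into the system. Differentiation in $z$, together with $A_z|_{(1,1,0)} = M_Q(0) = 1$ and $B_z|_{(1,1,0)} = \mu$, yields $\frac{d}{dt}\Ex[I_t] = \Ex[\Lambda_t] - \mu\Ex[I_t]$, the second row. Differentiation in $w$, with $B_w|_{(1,1,0)} = -\mu$ and $C_w|_{(1,1,0)} = 0$, yields $\frac{d}{dt}\Ex[R_t] = \mu\Ex[I_t]$, the third row. For the first row one differentiates in $s$, collects $A_s$, $B_s|_{(1,1,0)} = (r-\mu)\log\alpha + \mu\log\beta$, $C_s|_{(1,1,0)} = r\log\beta$, and $D_s = -\lambda_0 r$ at the evaluation point, and multiplies through by $-1$ to convert the resulting equation for $\partial_s\varphi$ into one for $\Ex[\Lambda_t]$; this reproduces both the first row of $\mathbf{A}_1$ and the forcing entry $\lambda_0 r$ of $\mathbf{b}_1$.

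The initial condition \eqref{system:ODE:first:initial} is read off directly from the initial configuration: $I_0 = i_0$, $R_0 = 0$, and $\Lambda_0 = \lambda^\infty(i_0,0) = \lambda_0 + i_0\log\alpha$, since no jumps of $C_t$ have yet occurred and the self-exciting contribution vanishes. Then \eqref{firstmoments} is simply the variation-of-parameters formula for the constant-coefficient linear inhomogeneous system $\dot{\mathbf{X}} = \mathbf{A}_1 \mathbf{X} + \mathbf{b}_1$. The main point requiring care is the justification of the interchange of differentiation in $(z,w,s)$ with the expectation defining $\varphi$ near the boundary point $s=0$; this rests on the assumed existence of $M_Q$ in a neighbourhood of zero, combined with non-explosiveness of the Hawkes-like process $C_t$ on bounded intervals, which together guarantee that $I_t$, $R_t$ and $\Lambda_t$ possess finite moments of the orders needed.
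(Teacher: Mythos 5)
Your proposal follows essentially the same route as the paper's proof: differentiate the generating-function PDE \eqref{pde:Gen:Fun} once in each of $z$, $w$, $s$, evaluate at $(z,w,s)=(1,1,0)$ using $M_Q(0)=1$ and $M_Q'(0)=-\Ex[Q]$, assemble the three resulting scalar equations into the matrix system, and conclude by variation of parameters. Your explicit observation that all four coefficients of the first-order terms vanish at $(1,1,0)$ — which is exactly why no second-order derivatives (i.e., second moments) survive and the system closes — is a correct and welcome clarification of a point the paper leaves implicit, as is your remark on justifying differentiation under the expectation.
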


We can use the same approach as in Lemma \ref{lemma:firtsmom} to compute the second moments. Besides, we will need to consider all 6 possible double products of the random variables $I_t,R_t$ and $\Lambda (t)$.

For the following Lemma, it will be useful to simplify the notation so that the matrix used to compute second moments is displayed nicely. Indeed, let
\begin{align*}
C_1&:=\log\left(\alpha\right)+\Ex[Q]-r\\
C_2&:=(r-\mu)\log\left(\alpha\right)+\mu\log(\beta)\\
C_3&:=r\log(\beta)\\
C_4&:=r\log(\alpha)
\end{align*}

An important remark is that without too much work we can transition from the models presented in Section \ref{Sec:Model} by just modifying the parameters of the original model. For example, if a pure-Hawkes (pH) model with intensity given by Equation \eqref{eqn:purehawkes} wants to be considered; then by setting $\alpha=\beta=1$ in the previous result it is obtained that the following:
\begin{align}\label{system:ODE:firstmoment:purehawkes}
\dfrac{d}{dt}\Xt^{pH}&=\mathbf{A}_1^{pH}\Xt + \mathbf{b}_1^{pH}\\[.4cm] \label{system:ODE:purehawkes:initial}
\overline{\mathbf{X}}_0^{pH}&=\left[\begin{array}{c}\lambda_0\\i_0\\0\end{array}\right]
\end{align}
where
\[
\mathbf{A}_1^{pH}=\left[\begin{array}{ccc} \Ex[Q]-r & 0& 0\\ 1 &-\mu&0\\ 0&\mu&0\end{array}\right]\qquad\text{and}\qquad\mathbf{b}_1=\left[\begin{array}{c} \lambda_0r \\ 0\\ 0\end{array}\right]
\]
which agrees with the result provided in \cite{koops2018}. Further, if a model whose infection rate increases or decreases proportional to the number of infected people in its baseline intensity (pI) as given by Equation \eqref{eqn:proptoI} is sought, then it should be set $\beta=1$ and the resulting system of ODEs is

\begin{align}\label{system:ODE:pI}
\dfrac{d}{dt}\Xt^{pI}&=\mathbf{A}_1\Xt^{pI} + \mathbf{b}_1^{pI}\\[.4cm] \label{system:ODE:pI:initial}
\overline{\mathbf{X}}_0^{pI}&=\left[\begin{array}{c}\lambda_0+i_0\log(\alpha)\\i_0\\0\end{array}\right]
\end{align}
where
\[
\mathbf{A}_1^{pI}=\left[\begin{array}{ccc} \log\left(\alpha\right)+\Ex[Q]-r & (r-\mu)\log\left(\alpha\right)& 0\\ 1 &-\mu&0\\ 0&\mu&0\end{array}\right]\qquad\text{and}\qquad\mathbf{b}_1=\left[\begin{array}{c} \lambda_0r \\ 0\\ 0\end{array}\right]
\]

\begin{rem}
Notice that in the two particular cases above given by systems of ODEs in Equations \eqref{system:ODE:firstmoment:purehawkes}-\eqref{system:ODE:pI:initial} since $\beta=1$, then the intensity becomes independent of $R_t$ and thus the entire system becomes independent of $R_t$ which is reflected by the third column of the matrices $\mathbf{A}_1^{pH}$ and $\mathbf{A}_1^{pI}$ being 0.
\end{rem}

Finally, a third important differentiation from the base case $\alpha,\beta>1$ is when the number of recovered individuals actually \emph{decreases} the likelihood of a new infection, such as in a classical SIR model and which is also described  in Section \ref{Sec:Model}. In this case, and as mentioned earlier, provided that the process is stopped whenever the intensity becomes negative, the corresponding system of differential equations for the first moments can be obtained from Lemma \ref{lemma:firtsmom} by setting $\beta=1/\tilde{\beta}$ with $\tilde{\beta}>1$ so that, in this case, $\beta<1$ and the baseline intensity given by Assumption \eqref{assump:baseline} becomes
\begin{equation}\label{eqn:intensity:SIR}
 \lambda^\infty(i,\rho)=\lambda_0+i\log(\alpha)+\rho\log(\beta) = \lambda_0+i\log(\alpha)-\rho\log(\tilde{\beta}).
\end{equation}
In this case, the corresponding system of differential equations governing the first moments of this system is given by

\begin{align}\label{system:ODE:SIR}
\dfrac{d}{dt}\Xt^{SIR}&=\mathbf{A}_1\Xt^{SIR} + \mathbf{b}_1^{SIR}\\[.4cm] \label{system:ODE:first:initial}
\overline{\mathbf{X}}_0^{SIR}&=\left[\begin{array}{c}\lambda_0+i_0\log(\alpha)\\i_0\\0\end{array}\right]
\end{align}
where
\[
\mathbf{A}_1^{SIR}=\left[\begin{array}{ccc} \log\left(\alpha\right)+\Ex[Q]-r & (r-\mu)\log\left(\alpha\right)-\mu\log(\tilde{\beta})& -r\log(\tilde{\beta})\\ 1 &-\mu&0\\ 0&\mu&0\end{array}\right]\qquad\text{and}\qquad\mathbf{b}_1=\left[\begin{array}{c} \lambda_0r \\ 0\\ 0\end{array}\right]
\]

Next, a characterization of the second moments is provided. The techniques and methods to obtain them follow from the ones used in Lemma \ref{lemma:firtsmom} but to solve this system it is necessary to obtain the solution of the system \eqref{system:ODE:firstmoment}-\eqref{system:ODE:first:initial}, since its solution its dependent (as expected) on the first moments of $(\Lambda(t),I_t,R_t)$.

\begin{lemma}\label{lemma:secondmom} For any time $t>0$, let $\Yt=\Big[\Ex[\Lambda^2(t),\Ex[I_t(I_t-1)],\Ex[R_t(R_t-1)], \Ex[\Lambda(t)I_t],\Ex[\Lambda(t)R_t],\Ex[I_tR_t]\Big]^T$. Then, $\Yt$ is the solution to the system of equations
\begin{align}\label{system:ODE:secondmoment}
\dfrac{d}{dt}\Yt&=\mathbf{A}_2\Yt + \mathbf{b}_2\\[.4cm] \label{system:ODE:second:initial}
\overline{\mathbf{Y}}_0&=\left[(\lambda_0+i_0\log(\alpha)))^2,i_0^2,0,i_0(\lambda_0+i_0\log(\alpha)),0,0\right]^T
\end{align}
where,

\[
\mathbf{A}_2=\left[\begin{array}{cccccc} 
2C_1&    0   &  0  &2C_2  & 2C_3& 0\\ 
 0  & -2\mu  &  0  &  2   &  0  & 0 \\
 0  &   0    &  0  &  0   &  0  & 2\mu \\  
 1  &   C_2  &  0  & C_1  &  0  & C_3 \\
 0  &    0   & C_3 & -\mu & C_1 & C_2\\
 0  &   \mu  &  0  &  0   &  1  & -\mu \\
\end{array}\right]
\]
\normalsize
and
\[
\mathbf{b}_2=\left[\begin{array}{c} 
\left[\log^2\left(\alpha\right)+2\log(\alpha)+\Ex[Q^2]+2\lambda_0r\right]\Ex[\Lambda(t)] + \mu\log^2\left(\frac{\alpha}{\beta}\right)\Ex[I_t]\\ 
0 \\ 
0 \\ 
(C_1+r)\Ex[\Lambda(t)]+r\lambda_0\Ex[I_t]+C_4\Ex[R_t] \\ 
(r\lambda_0+C_3)\Ex[R_t]+(C_2-C_4)\Ex[I_t]\\
 0
\end{array}\right]
\]

That is,
\begin{equation}\label{secondmoments}
\Yt=\overline{\mathbf{Y}}_0e^{\mathbf{A}_2 t} + \int_0^te^{\mathbf{A}_2(t-s)}\mathbf{b}_2ds
\end{equation}
\end{lemma}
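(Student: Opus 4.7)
The plan is to mimic the derivation of Lemma \ref{lemma:firtsmom} one order higher, by differentiating the PDE (\ref{pde:Gen:Fun}) for $\varphi(t,z,w,s)$ twice with respect to the appropriate pair of variables among $\{z,w,s\}$ and evaluating at $(z,w,s)=(1,1,0)$. At this evaluation point, the six second partials of $\varphi$ reproduce precisely the six entries of $\Yt$: for instance $\partial_s^2\varphi\big|_{(1,1,0)}=\Ex[\Lambda^2(t)]$, $\partial_z^2\varphi\big|_{(1,1,0)}=\Ex[I_t(I_t-1)]$, $\partial_{sz}\varphi\big|_{(1,1,0)}=-\Ex[\Lambda(t)I_t]$, and similarly for the other combinations. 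Since the PDE is linear of first order in $(t,z,w,s)$ with smooth coefficients, interchanging the differentiations with $\partial_t$ is justified, and each of the six differentiations produces one linear ODE in $t$ for the corresponding second-moment quantity.

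I would carry out the six derivative calculations in this order: first the pure-$s$ double derivative (top row of $\mathbf{A}_2$), since it generates the $\Ex[Q^2]$, $\log^2(\alpha)$ and $\lambda_0 r$ contributions from applying the product rule to the factor $rs-1+(1/\alpha)^s M_Q(s)z$ and using $M_Q(0)=1$, $M_Q'(0)=-\Ex[Q]$, $M_Q''(0)=\Ex[Q^2]$; then the pure-$z$ and pure-$w$ cases (rows 2 and 3), which are the simplest since only the drift terms in front of $\partial_z\varphi$ and $\partial_w\varphi$ contribute; then the three mixed partials $\partial_{sz}$, $\partial_{sw}$, $\partial_{zw}$ which fill rows 4--6. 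Many of the coefficients collapse nicely because, after evaluating the coefficient functions at $(1,1,0)$ one recovers the constants $C_1,C_2,C_3,C_4$ (and $\mu$, $r$) already appearing in Lemma \ref{lemma:firtsmom}, together with their doubled counterparts $2C_1,2C_2,2C_3$ on the diagonal of rows producing pure second-order moments — the factor of $2$ being the usual product-rule signature of differentiating twice in the same variable.

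The main bookkeeping obstacle will be identifying which terms collapse into the homogeneous matrix action $\mathbf{A}_2\Yt$ versus which ones produce the forcing vector $\mathbf{b}_2$. The latter arises precisely when a second derivative of the coefficient functions in the PDE leaves behind a surviving first derivative of $\varphi$, evaluated at $(1,1,0)$; by the correspondence above those leftover terms equal one of $\Ex[\Lambda(t)],\Ex[I_t],\Ex[R_t]$, all of which are known functions of $t$ via Lemma \ref{lemma:firtsmom}. For example, differentiating $(1/\alpha)^s M_Q(s)z\cdot\partial_s\varphi$ twice in $s$ gives a term with coefficient $\log^2(\alpha)-2\log(\alpha)\Ex[Q]+\Ex[Q^2]$ times $\partial_s\varphi|_{(1,1,0)}=-\Ex[\Lambda(t)]$, which becomes the $\Ex[\Lambda(t)]$ contribution to the first entry of $\mathbf{b}_2$ (modulo the remaining contributions from $2rs\partial_s\varphi$ and $-\lambda_0rs\varphi$), and differentiating the $w$-coefficient $-\mu(\alpha/\beta)^s w$ twice in $s$ produces the $\mu\log^2(\alpha/\beta)\Ex[I_t]$ term. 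A careful but mechanical execution of these six bookkeeping exercises yields exactly the matrix $\mathbf{A}_2$ and vector $\mathbf{b}_2$ stated in the lemma.

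Finally, the initial condition $\overline{\mathbf{Y}}_0$ follows by applying the six second partials to the initial data $\varphi(0,z,w,s)=z^{i_0}e^{-s(\lambda_0+i_0\log\alpha)}$ and evaluating at $(1,1,0)$, and the solution formula \eqref{secondmoments} is the standard variation-of-constants representation for a linear inhomogeneous ODE system with constant matrix $\mathbf{A}_2$ and time-dependent forcing $\mathbf{b}_2$ (whose time dependence enters through the first moments from Lemma \ref{lemma:firtsmom}).
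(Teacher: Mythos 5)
Your proposal is correct and is precisely the route the paper intends: the paper omits the proof of this lemma, stating only that it follows by the same techniques as Lemma \ref{lemma:firtsmom}, namely differentiating the PDE \eqref{pde:Gen:Fun} twice with respect to each pair of variables from $\{z,w,s\}$, evaluating at $(z,w,s)=(1,1,0)$, and assembling the resulting six linear ODEs with the first moments entering as the known forcing term. The only slip is a sign in your sample computation: since $M_Q(s)=\Ex[e^{-sQ}]$ gives $M_Q'(0)=-\Ex[Q]$, differentiating $\left(\frac{1}{\alpha}\right)^sM_Q(s)z\,\frac{\partial}{\partial s}\varphi$ twice in $s$ yields the coefficient $\log^2(\alpha)+2\log(\alpha)\Ex[Q]+\Ex[Q^2]$ rather than $\log^2(\alpha)-2\log(\alpha)\Ex[Q]+\Ex[Q^2]$, which incidentally suggests the stated $\mathbf{b}_2$ carries a typo ($2\log(\alpha)$ in place of $2\log(\alpha)\Ex[Q]$).
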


\section{Some considerations regarding the Estimation of the State Dependent Hawkes Process} \label{section:estimation}

As with many many point processes, there are mainly three types of estimation procedures: MLE, (generalized)MoM and LSE-type methods. Below, a discussion on each of these three process, their advantages and disadvantages is presented. It is important to emphasize that there is not a definite answer into which method provides an advantage when doing inference over the parameters of the model and this intricate topic is left out for future research. 

\begin{itemize}
	\item \textbf{Maximum Likelihood Estimation (MLE)}. This method is one of the most used and regarded within the academic literature. This is due in part because there there are many theoretical results that guarantee that the MLE methods will yield an optimal solution. Unfortunately, for many Markovian point processes the evaluation of the log-likelihood function is of order $O(N_t^2)$. Several algorithms such as EM are used to improve this (see \cite{lewis2012self}), but in general, the biggest problem of MLE is that the likelihood curve is very flat, with many local maxima and without a clear way to decide which is the global maximum. An important step in improving this is discussed in \cite{lewis2012self} and \cite{veen2008estimation}. Many of these methods apply to the exponential kernel form used in this paper (see Assumption \eqref{assump:phi}), but also have a constant baseline intensity. It is still an ongoing research topic of this author to generalize such methods to a state dependent intensity. 
	\item \textbf{(Generalized) Method of  Moments (gMoM)}. These methods are derived under the assumption that the process is in its limiting stationary state, and as any Method of Moments, the idea is to form a system of equations of $n\times n$, where $n$ is the number of parameters to be determined and $n$ linear independent equations relating the moments are used. However, for Hawkes processes, usually there are not enough linear independent conditions on the moments and thus the Autocovariance function is introduced to generate other equations. A prime example is given in the work by \cite{foschi2020warnings}. Further, these methods are inefficient when the number of observations is ``small'' and they usually does not work properly on higher dimensions. However, it is important to mention that in our case, these methods might not be the best since we are trying to model the early stages of the epidemic where the processes are far away from the stationary state. 
	\item \textbf{Least Square Estimation (LSE) type methods}. These methods have not been explored until recently, mainly because the order of these methods is similar to the ones of MLE. However, the recent work by \cite{cartea2021gradient}  show that unlike MLE methods, LSE methods can possess certain algebraic properties that help with the stochastic approximation of the kernels to then maximize or minimize the LSE functional. 
\end{itemize}

The nature of the process we are dealing with requires special care in the estimation of parameters and as such, this paper will not try to just follow the MLE method presented in Daley Vere Jones, Section , but rather this sensitive topic is left as an object of further research. Specially if Assumption \eqref{assump:phi} is relaxed and we allow for a more general kernel and also a state-dependent self-exciting kernel.

\section{Empirical and Numerical Examples}\label{Sec:Numerical} 

The objective of this section is to provide numerical evidence of the behaviour that the model has under the different scenarios proposed as well as the implications of increasing or decreasing the level of quarantine provided by the random variable $Q$ and other change in the parameters.

For all the experiments there will be four levels of Quarantine: high ($Q_H$), medium-high ($Q_{M_H}$), medium-low ($Q_{M_L}$) and low ($Q_L$). To completely specify the levels of quarantine provided by these random variables, their distribution is presented next.
\small
\[
Q_H=\begin{cases} 0.05 & \text{w.p. }0.92\\ 0.45 & \text{w.p. }0.03\\ 0.60 & \text{w.p. }0.02\\ 0.95 & \text{w.p. }0.03 \end{cases},\quad  Q_{M_H}=\begin{cases} 0.05 & \text{w.p. }0.11\\ 0.45 & \text{w.p. }0.18\\ 0.60 & \text{w.p. }0.46\\ 0.95 & \text{w.p. }0.25 \end{cases},\quad  Q_{M_L}=\begin{cases} 0.05 & \text{w.p. }0.12\\ 0.45 & \text{w.p. }0.46\\ 0.60 & \text{w.p. }0.32\\ 0.95 & \text{w.p. }0.10 \end{cases},\quad Q_L=\begin{cases} 0.05 & \text{w.p. }0.02\\ 0.45 & \text{w.p. }0.02\\ 0.60 & \text{w.p. }0.02\\ 0.95 & \text{w.p. }0.91 \end{cases}
\] 
\normalsize
As it can be seen, the random variables take the same values in all cases but their probabilities change. They are specified so that $\Ex[Q_L]=0.9$, $\Ex[Q_{M_L}]=0.6$, $\Ex[Q_{M_H}]=0.5$ and $\Ex[Q_H]=0.1$.

Next, we provide several figures with various simulations of the different cases mentioned in Section \ref{Sec:Quantities}. In particular, it is important to notice how the difference on the parameters affect the speed at which the number of infected individuals grow. Also, it is important to remark that although the simulation algorithm \ref{algo:SIR} is a variation of the thinning algorithm by Ogata described in Appendix \ref{Sec:algorithms}, to the author's knowledge, an explicit algorithm to simulate a Hawkes process with a state dependent intensity is not readily available. Thus, as part of this work, a detailed pseudo-code for simulating these kind of processes is provided in the Appendix \ref{Sec:algorithms}. Furthermore, due to the immense amount of simulations and computations required to simulate these processes -at the end the thinning algorithm is a variant of an acceptance-rejection method and as such, many simulations are rejected- the process cannot be easily simulated for large time intervals.

To understand the behaviour of the process counting the number of infected people some different parameters of the model are changed, in Figures \ref{fig:pure_hawkes}-\ref{fig:beta_less_1} below, the number of infected people is simulated for the small time interval $[0,4]$ together with the corresponding solution of the differential equation under different scenarios, where either $\alpha$, $\beta$ or the Quarantine distribution changes. This exercise has two purposes: verify that the simulation algorithm and the differential equations yield similar results (performing a cross-verification) and showing how the change in different parameters yield logical conclusions as well as exploring how such changes affect the speed at which the number of infected people grows. 

To understand the behaviour of the process counting the number of infected people under a general purely Hawkes process (i.e. we set $\alpha=\beta=1$), in Figure \ref{fig:pure_hawkes}, $I_t$ is plotted under 12 different scenarios and each scenario under the 4 different Quarantine scenarios. In all the scenarios, the parameters were set to be the same except for the decaying parameter $r$ (see Equation \eqref{assump:phi}) and the constant baseline intensity $\lambda_0$. As expected, the number of infected people decreases as $r$ gets larger, meaning that an infectious person will contribute to a new infection significantly only during a short time span. 

Two interesting cases arrive in the general state dependent Hawkes baseline intensity case. First, as can be seen in Figure  \ref{fig:alpha_more_beta}, when $\alpha>\beta$ (in this case $\alpha=1.4$ while $\beta=1.2$, we see that there are more infections as compared to the purely Hawkes case reflected in Figure \ref{fig:pure_hawkes}. However, when $\beta>\alpha>1$, we see that the infections also grow as compared within the baseline case of the purely Hawkes model but they are actually even higher than in the previous case where $\alpha>\beta$. The reason for this is that the number of infected people becomes, as time progresses, comparative smaller to the number of people that have recovered and continue propagating the virus. This is a classical behaviour of viruses that do not create immunity as is the case of COVID-19. Finally, in Figure \ref{fig:beta_less_1} we plot the case where $\alpha>1$ but $\beta<1$. This is to illustrate how a $\beta$ less than 1 can mitigate greatly the size of the epidemics. Indeed, as a good comparison, note that, with all the parameters save $\alpha$ and $\beta$ kept constant, the maximum range up to time 4 for the number of infected individuals in the Pure Hawkes case is of roughly 400, while on the state-dependent Hawkes cases with $\alpha>\beta>1$ is of 600; when $\beta>\alpha>1$ is of 1200 while on the case where $\beta<1<\alpha$ it is of 200.

Finally, in Figure \ref{fig:long_term}, the long-run behavior of the state-dependent Hawkes process with $\beta<1$ is displayed. However, as mentioned above the simulation of the process for larger time windows is not feasible and as such only the solution of the differential equation is provided. The purpose of this figure is to show that under this case, we can observe the typical behaviour of an SIR model. Since this was computed in the long run, we only plot $t$ vs $\Ex[I_t]$.

\begin{figure}[H]
	\centering
		\includegraphics[width=0.9\textwidth]{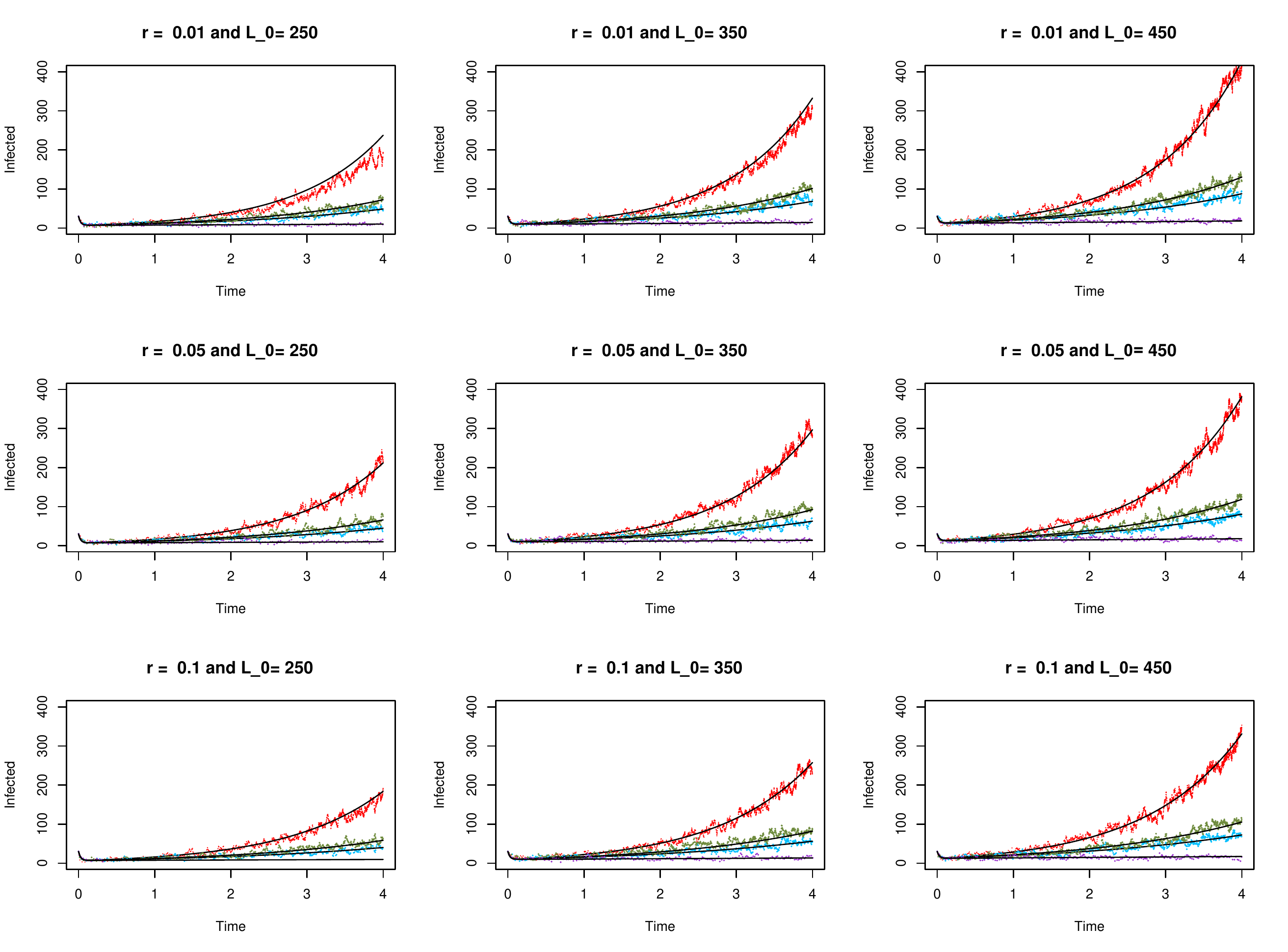}
	\caption{Figure showing 12 simulations and their behaviour for different parameters under a purely Hawkes approach. Plots in the same row share the same value of $r$ while plots in the same column share the same value of $\lambda_0$. The rest of the parameters, as well as the scale, where kept the same for comparison purposes. The Quarantine level is also represented, with High Quarantine ($Q_H$) in purple, Medium-High Quarantine ($Q_{M_H}$) in blue, Medium-Low Quarantine ($Q_{M_L}$) in green and Low Quarantine ($Q_L$) in red. Furthermore, the mean (deterministic) behaviour of the system is plotted with the same color as the simulated paths.}
	\label{fig:pure_hawkes}
\end{figure}

\begin{figure}[H]
	\centering
		\includegraphics[width=0.9\textwidth]{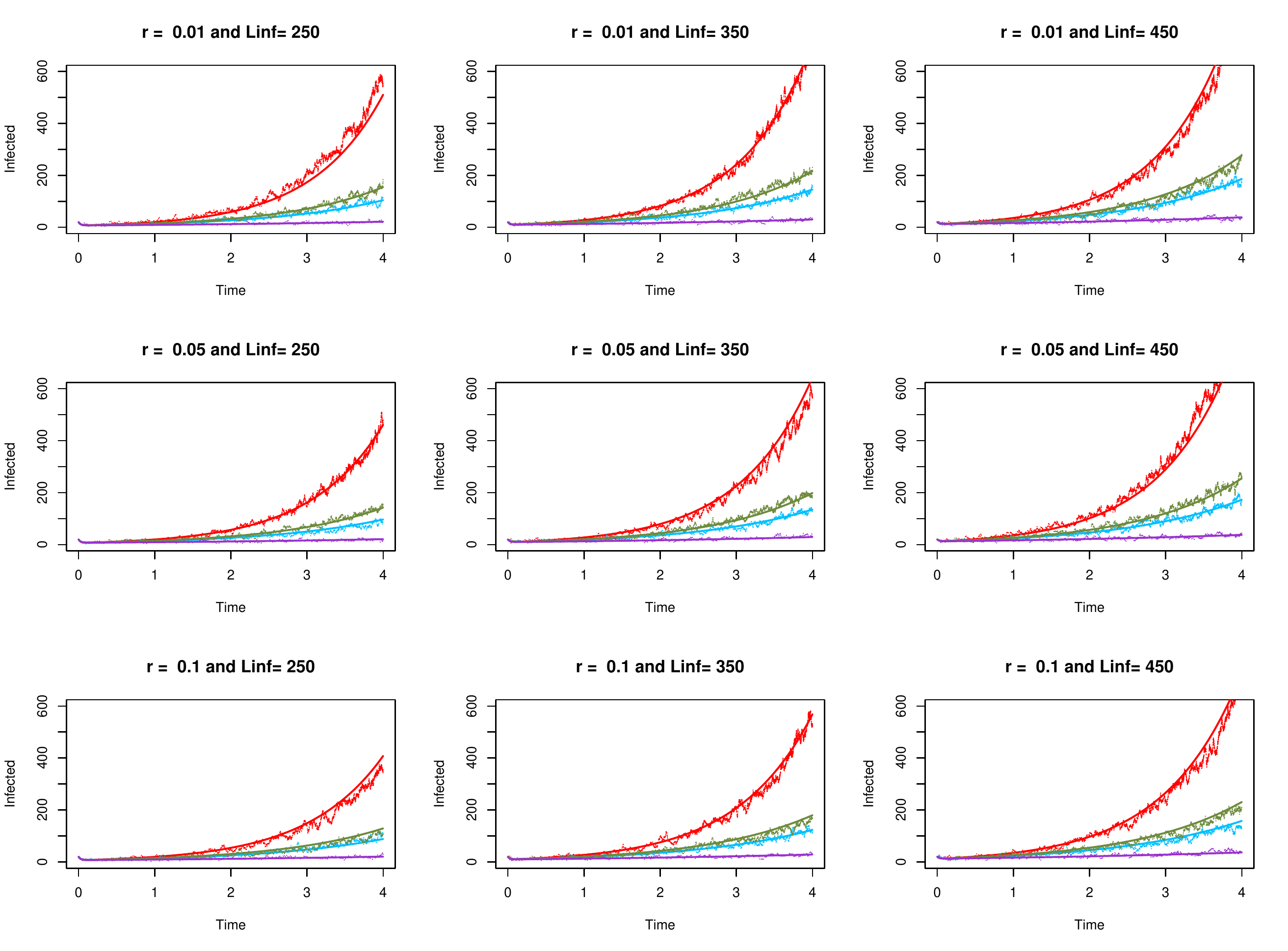}
	\caption{Figure showing 12 simulations and their behaviour for different parameters under a general state dependent Hawkes intensity with $\alpha=1.4$ and $\beta=1.2$. Plots in the same row share the same value of $r$ while plots in the same column share the same value of $\lambda_0$. The rest of the parameters, as well as the scale, where kept the same for comparison purposes. Furthermore, the Quarantine level is also represented, with High Quarantine ($Q_H$) in purple, Medium-High Quarantine ($Q_{M_H}$) in blue, Medium-Low Quarantine ($Q_{M_L}$) in green and Low Quarantine ($Q_L$) in red.}
	\label{fig:alpha_more_beta}
\end{figure}

\begin{figure}[H]
	\centering
		\includegraphics[width=0.9\textwidth]{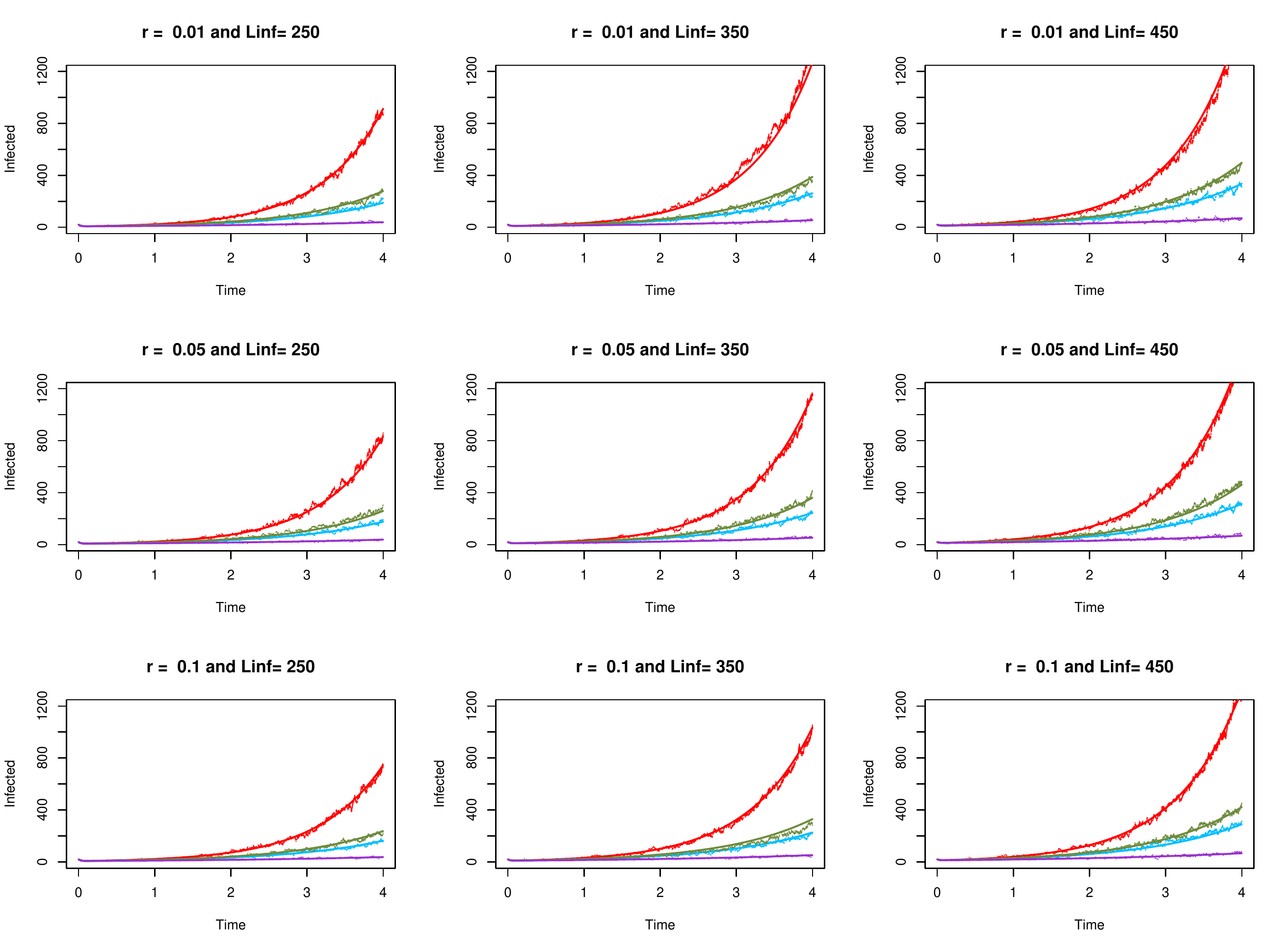}
	\caption{Figure showing 12 simulations and their behaviour for different parameters under a general state dependent Hawkes intensity with $\alpha=1.2$ and $\beta=1.4$. Plots in the same row share the same value of $r$ while plots in the same column share the same value of $\lambda_0$. The rest of the parameters, as well as the scale, where kept the same for comparison purposes. Furthermore, the Quarantine level is also represented, with High Quarantine ($Q_H$) in purple, Medium-High Quarantine ($Q_{M_H}$) in blue, Medium-Low Quarantine ($Q_{M_L}$) in green and Low Quarantine ($Q_L$) in red.}
	\label{fig:alpha_less_beta}
\end{figure}

\begin{figure}[H]
	\centering
		\includegraphics[width=0.9\textwidth]{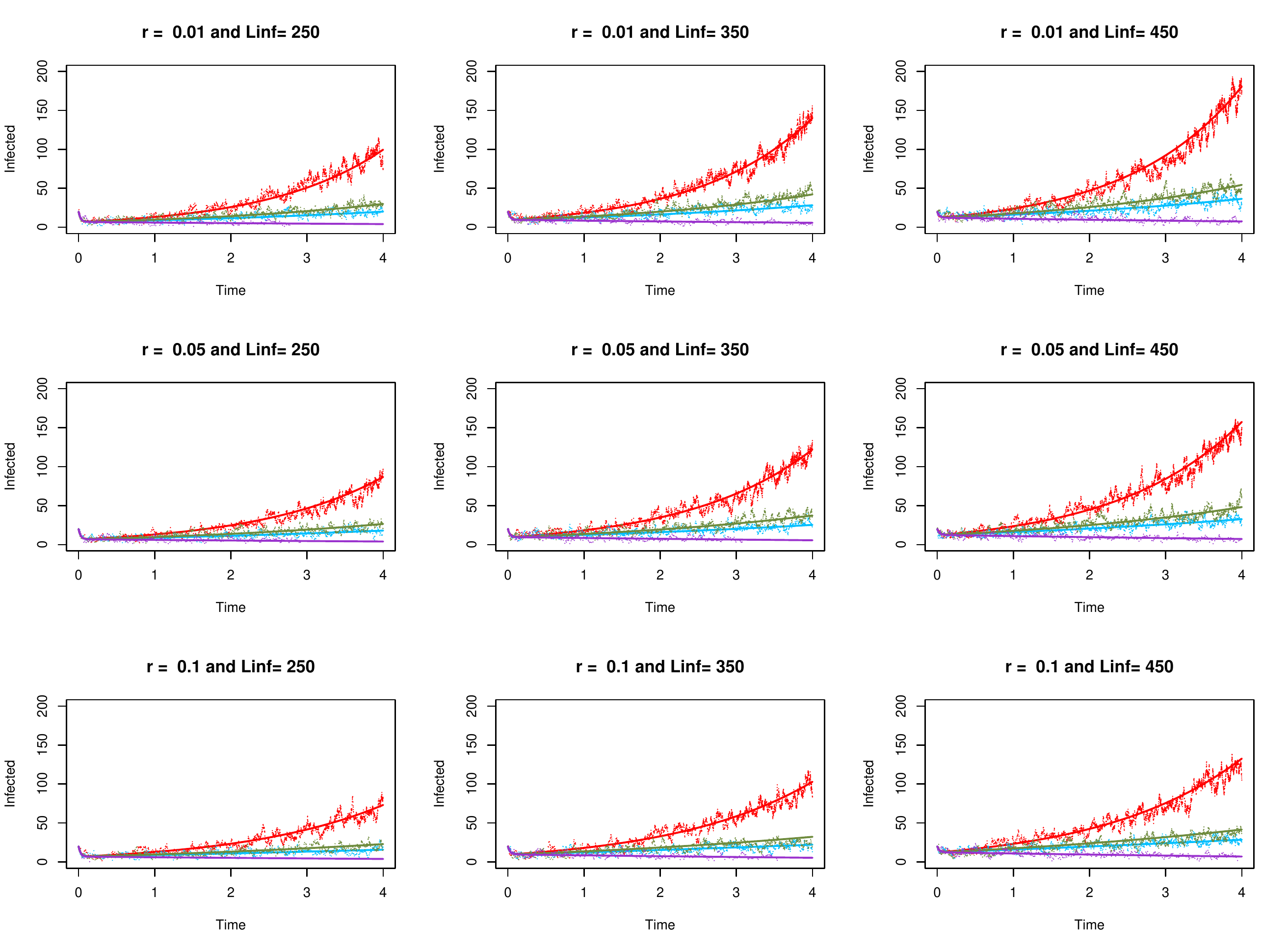}
	\caption{Figure showing 12 simulations and their behaviour for different parameters under a general state dependent Hawkes intensity with $\alpha=1.2$ and $\beta=0.8$. Plots in the same row share the same value of $r$ while plots in the same column share the same value of $\lambda_0$. The rest of the parameters, as well as the scale, where kept the same for comparison purposes. Furthermore, the Quarantine level is also represented, with High Quarantine ($Q_H$) in purple, Medium-High Quarantine ($Q_{M_H}$) in blue, Medium-Low Quarantine ($Q_{M_L}$) in green and Low Quarantine ($Q_L$) in red.}
	\label{fig:beta_less_1}
\end{figure}

\begin{figure}[H]
	\centering
		\includegraphics[width=0.6\textwidth]{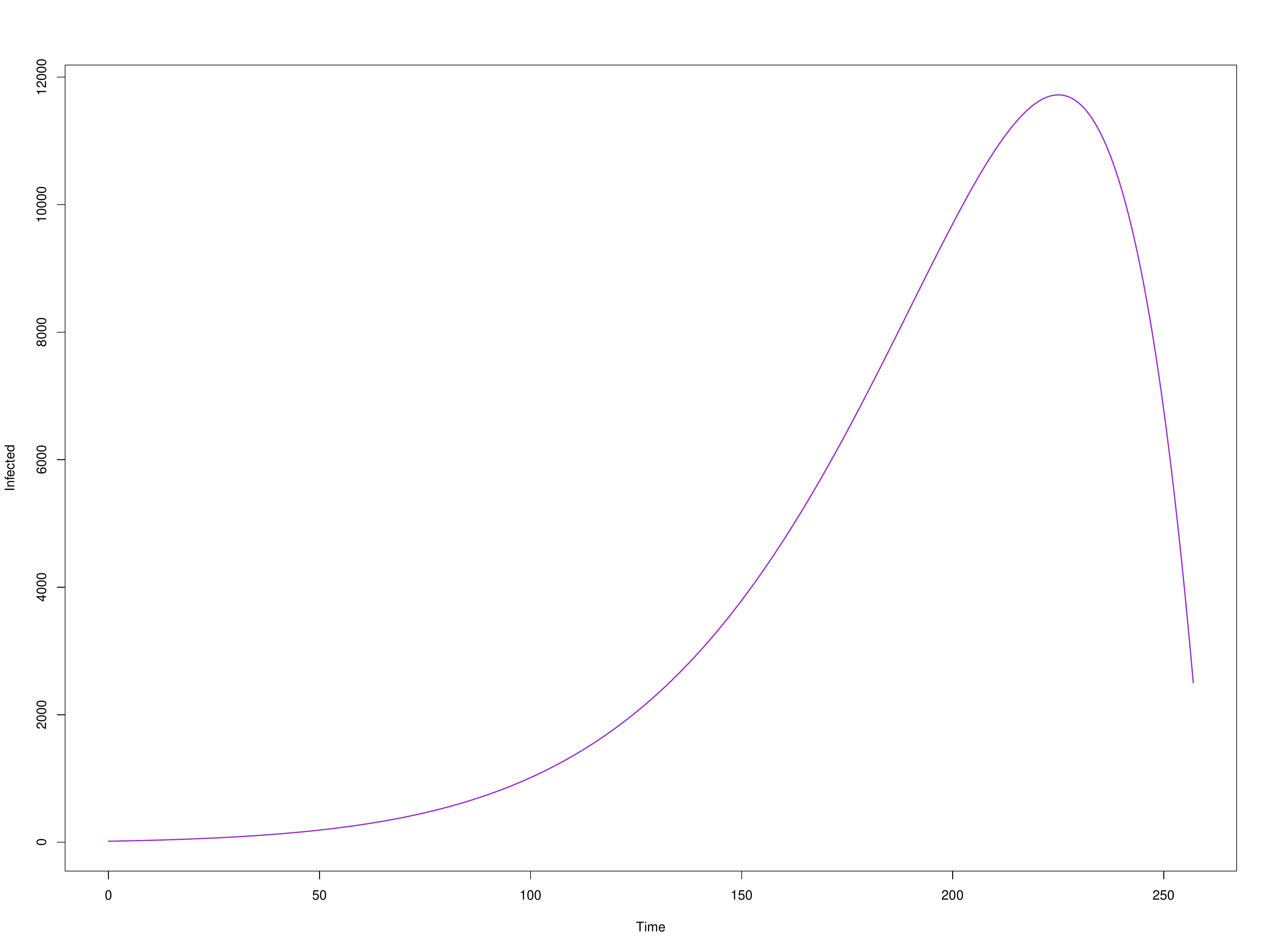}
	\caption{Figure showing the long term behaviour of the mean for the number of infected people under a general state dependent Hawkes intensity.}
	\label{fig:long_term}
\end{figure}

\section{Conclusions and further Research Direction}\label{Sec:Conclusions}

A phenomena that has been observed in the current COVID-19 epidemics is the overdispersion of process describing the number of contagions $C_t$. That is, that for different time windows, the variance of the number of new contagions within that time interval is great larger than their average. I.e., that for any real number $\tau>0$,
\[
\text{Var}[C_{t+\tau}-C_t]\gg\Ex[C_{t+\tau}-C_t]
\]

This work tries to construct a model that takes into account the overdispersion observed in epidemics with a high control reproduction number that the classic stochastic-SIR models fail to capture due to their Poissonian nature (where the expectation and variance are roughly the same for any time interval).

One possible solution to model the overdispersion is through the usage of regular point processes, in particular self-exciting point processes. Not only can they capture this phenomena but the rationale behind them makes sense. Every new infection will increase the likelihood of a new infection occurring. However, one of the main difficulties, and a very active are of research currently, is the efficient estimation of its parameters. While it is part of the ultimate goal of this work to research efficient estimation methods, it is not the primary intent of this paper. The basic estimation method is in terms of the likelihood function, but there is no clear way to determine whether the method found a local maxima or a global one. Furthermore, the likelihood curve is very flat and although theoretical convergence is guaranteed, in practice it is hard to achieve.  

To continue this research, there are two interesting directions to pursue. First to explore a regime-switching model where the parameters of the model change or switch randomly to different scenarios considering possible social, economical and political factors. Particularly of interest is the level of quarantining for the general population, where an stochastic optimization problem can be devised for minimizing the economic impact of a disease like COVID-19. Another direction of research is creating a full compartmental model where the transitions to different compartments are guided by point processes. This approach to classical stochastic compartmental models is of interest due to the flexibility and the statistical properties of point processes.

\appendix

\allowdisplaybreaks

\section{Proofs of Section \ref{Sec:Quantities}}\label{Sec:Proofs}

\begin{proof}[Proof of Theorem \ref{thm:pde:f}]
    Let $\Dt>0$ denote an infinitesimal time step. Due to the dynamics of the SIR model, from time $t$ to time $t+\Dt$, and since we assume that the point processes defining the arrivals of events is simple (see Equation \ref{def:simple}),only one of following four (disjoint) events can happen:
		
		\begin{itemize}
			\item[$\blacksquare$] Event of type (I): There is a new infection but no recoveries. That is,
			\[
			C_{t+\Dt}-C_t=1\qquad\qquad\text{and}\qquad\qquad R_{t+\Dt}-R_t=0
			\]
			\item[$\blacksquare$] Event of type (II): There is no new infection but a recovery. Then,
			\[
			C_{t+\Dt}-C_t=0\qquad\qquad\text{and}\qquad\qquad R_{t+\Dt}-R_t=1
			\]
			\item[$\blacksquare$] Event of type (III): There is a new infection and a recovery. Thus,
			\[
			C_{t+\Dt}-C_t=1\qquad\qquad\text{and}\qquad\qquad R_{t+\Dt}-R_t=1
			\]
			\item[$\blacksquare$] Event of type (IV): There are no new infections neither recoveries. Therefore,
			\[
			C_{t+\Dt}-C_t=0\qquad\qquad\text{and}\qquad\qquad R_{t+\Dt}-R_t=0
			\]
		\end{itemize}
 Recall that $\{t_i\}_{i=1}^\infty$ are the jump times of the counting process $(C_t)_{t\geq0}$. Then, define the function
\begin{equation*}
J(t)=\int_0^t Q_s e^{-r(t-s)}dC_s=\sum\limits_{t_i\leq t}Q_ie^{-r(t-t_i)}.
\end{equation*}
	Fix the terminal state at time $t+\dt$, and computing
	\begin{align*}
	F\Big(t+\dt,&i,\rho,\lambda-r(\lambda-\lambda^\infty(i,\rho))\dt\Big)=\Px\Big[I_{t+\dt}=i,R_{t+\dt}=\rho,\Lambda(t+\dt)\leq \lambda-r(\lambda-\linf)\dt\Big]\\[.3cm]
	&=\Px\left[I_{t+\dt}=i,R_{t+\dt}=\rho,\lambda^\infty(I_{t+\dt},R_{t+\dt})+\sum\limits_{t_i\leq t+\dt}Q_ie^{-r(t+\dt-t_i)}\leq \lambda-r(\lambda-\linf)\dt\right]\\[.3cm]
	&=\Px\Big[I_{t+\dt}=i,R_{t+\dt}=\rho,\lambda^\infty(i,\rho)+J(t+\dt)\leq \lambda(1-r\dt)+r\dt\linf\Big]\\[.3cm]
	&=\Px\Big[I_{t+\dt}=i,R_{t+\dt}=\rho,(1-r\dt)\lambda^\infty(i,\rho)+J(t+\dt)\leq \lambda(1-r\dt)\Big]\\[.3cm]
	&=\Px\Big[I_{t+\dt}=i,R_{t+\dt}=\rho,(1-r\dt)\big[\lambda^\infty(I_t,R_t)+J(t)\big]+ (1-r\dt)\big[\lambda^\infty(i,\rho)-\lambda^\infty(I_t,R_t)\big]\\
	&\qquad\qquad\qquad+ \big(J(t+\dt)-(1-r\dt)J(t)\big)\leq \lambda(1-r\dt)\Big]\\[.3cm]
	&=\Px\Big[I_{t+\dt}=i,R_{t+\dt}=\rho,(1-r\dt)\Lambda(t)+ (1-r\dt)\big[\lambda^\infty(i,\rho)-\lambda^\infty(I_t,R_t)\big]\\
	&\qquad\qquad\qquad+ \Big(\sum\limits_{t_i\leq t+\dt}\left\{(1-r\dt)Q_ie^{-r(t-t_i)}+o(\dt)\right\}-(1-r\dt)J(t)\Big)\leq \lambda(1-r\dt)\Big]\\[.3cm]
	&=\Px\left[I_{t+\dt}=i,R_{t+\dt}=\rho,\Lambda(t)+ \big[\lambda^\infty(i,\rho)-\lambda^\infty(I_t,R_t)\big]+\sum\limits_{t<t_i\leq t+\dt}Q_ie^{-r(t-t_i)}+o(\dt)\leq \lambda\right]\\
	&=\Px\left[I_{t+\dt}=i,R_{t+\dt}=\rho,\Lambda(t)\leq \lambda- \big[\lambda^\infty(i,\rho)-\lambda^\infty(I_t,R_t)\big]-\sum\limits_{t<t_i\leq t+\dt}Q_i+o(\dt)\right]\\
	\end{align*}
	Since we are later going to take a difference quotient, divide by $\dt$ and take the limit as $dt\to0$, for ease of notation, we will disregard all the terms of lower order after displaying them once and we will introduce the following notation
	\begin{align*}
	\Delta\lambda^{\infty,1}_{i,\rho}&=\lambda^\infty(i,\rho)-\lambda^\infty(i-1,\rho)\\
	\Delta\lambda^{\infty,2}_{i,\rho}&=\lambda^\infty(i,\rho)-\lambda^\infty(i+1,\rho-1)\\
	\Delta\lambda^{\infty,3}_{i,\rho}&=\lambda^\infty(i,\rho)-\lambda^\infty(i,\rho-1)\\
	\Delta\lambda^{\infty,4}_{i,\rho}&=\lambda^\infty(i,\rho)-\lambda^\infty(i,\rho)=0\\
	\end{align*}

Define
\[
	E(t+\dt;t,i,\rho,\lambda)=\left\{I_{t+\dt}=i,R_{t+\dt}=\rho,\Lambda(t)\leq \lambda- \big[\lambda^\infty(i,\rho)-\lambda^\infty(I_t,R_t)\big]-\sum\limits_{t<t_i\leq t+\dt}Q_i\right\}
\]
Then, by recalling that the point processes are simple and that $\{Q,Q_i\}_{i=1}^\infty$ is a sequence of i.i.d. random variables, we have that
\[
	E(t+\dt;t,i,\rho,\lambda)=\left\{I_{t+\dt}=i,R_{t+\dt}=\rho,\Lambda(t)\leq \lambda- \big[\lambda^\infty(i,\rho)-\lambda^\infty(I_t,R_t)\big]-Q\right\}
\]
and thus we can defined the events
	\begin{align*}
	E_I(t,i,\rho,\lambda)&=\left\{I_{t}=i-1,R_{t}=\rho,\Lambda(t)\leq \lambda- \Delta\lambda^\infty_1(i,\rho)-Q,\; C_{t+\dt}-C_t=1,R_{t+\dt}-R_t=0 \right\}\\[.3cm]
	E_{II}(t,i,\rho,\lambda)&=\bigg\{I_{t}=i+1,R_{t}=\rho-1,\Lambda(t)\leq \lambda- \Delta\lambda^\infty_2(i,\rho),\;C_{t+\dt}-C_t=0,R_{t+\dt}-R_t=1 \bigg\}\\[.3cm]
	E_{III}(t,i,\rho,\lambda)&=\left\{I_{t}=i,R_{t}=\rho-1,\Lambda(t)\leq \lambda- \Delta\lambda^\infty_3(i,\rho)-Q,\;C_{t+\dt}-C_t=1,R_{t+\dt}-R_t=1 \right\}\\[.3cm]
	E_{IV}(t,i,\rho,\lambda)&=\bigg\{I_{t}=i,R_{t}=\rho,\Lambda(t)\leq \lambda- \Delta\lambda^\infty_4(i,\rho),\;C_{t+\dt}-C_t=0,R_{t+\dt}-R_t=0\big] \bigg\}\\[.3cm]
	\end{align*}
	By considering the likelihood of the 4 types of events happening and the dynamics of the SIR model, it is clear that
	\begin{equation}\label{eqn:sum:events}
	\Px\Big[E(t+\dt;t,i,\rho,\lambda)\Big]=\Px\Big[E_I(t,i,\rho,\lambda)\Big]+\Px\Big[E_{II}(t,i,\rho,\lambda)\Big]+\Px\Big[E_{III}(t,i,\rho,\lambda)\Big]+\Px\Big[E_{IV}(t,i,\rho,\lambda)\Big]
	\end{equation}
	In the following, for exposition purposes,  we will compute each of the fours summands separately, and we will only consider operations up to the order $\dt$. Further,
	Then, it follows that

	\begin{align*}
		\Px\Big[E_I(t,i,\rho,\lambda)\Big]&=\int_0^{\lambda-\Delta\lambda^{\infty,1}_{i,\rho}} f(t,i-1,\rho,y)\Px[C_{t+\dt}-C_t=1]\Px[R_{t+\dt}-R_t=0]\Px\Big[Q\leq \lambda-\Delta\lambda^{\infty,1}_{i,\rho}-y\Big]dy\\
		&=\int_0^{\lambda-\Delta\lambda^{\infty,1}_{i,\rho}} f(t,i-1,\rho,y)\Big[y\dt+o(\dt)\Big]e^{-(i-1)\mu\dt}\Px\Big[Q\leq \lambda-\Delta\lambda^{\infty,1}_{i,\rho}-y\Big]dy\\
		&=\int_0^{\lambda-\Delta\lambda^{\infty,1}_{i,\rho}} f(t,i-1,\rho,y)\Big[y\dt+o(\dt)\Big]\Big[1-(i-1)\mu\dt+o(\dt)\Big]\Px\Big[Q\leq \lambda-\Delta\lambda^{\infty,1}_{i,\rho}-y\Big]dy\\
		&=\int_0^{\lambda-\Delta\lambda^{\infty,1}_{i,\rho}} y\dt f(t,i-1,\rho,y)\Px\Big[Q\leq \lambda-\Delta\lambda^{\infty,1}_{i,\rho}-y\Big]dy+o(\dt),
	\end{align*}
	\begin{align*}
		\Px\Big[E_{II}(t,i,\rho,\lambda)\Big]&=\int_0^{\lambda-\Delta\lambda^{\infty,2}_{i,\rho}} f(t,i+1,\rho-1,y)\Px[C_{t+\dt}-C_t=0]\Px[R_{t+\dt}-R_t=1]dy\\
		&=\int_0^{\lambda-\Delta\lambda^{\infty,2}_{i,\rho}} f(t,i+1,\rho-1,y)\Big[1-y\dt + o(\dt)\Big]\left[1-e^{(i+1)\mu\dt}\right]dy\\
		&=\int_0^{\lambda-\Delta\lambda^{\infty,2}_{i,\rho}} f(t,i+1,\rho-1,y)\Big[1-y\dt + o(\dt)\Big]\left[1-(1-(i+1)\mu\dt+o(\dt))\right]dy\\
		&=\int_0^{\lambda-\Delta\lambda^{\infty,2}_{i,\rho}} f(t,i+1,\rho-1,y)\Big[1-y\dt + o(\dt)\Big](i+1)\mu\dt dy+o(\dt)\\
		&=\int_0^{\lambda-\Delta\lambda^{\infty,2}_{i,\rho}} (i+1)\mu\dt f(t,i+1,\rho-1,y)dy+o(\dt),
	\end{align*}
	
	\begin{align*}
		\Px\Big[E_{III}(t,i,\rho,\lambda)\Big]&=\int_0^{\lambda-\Delta\lambda^{\infty,3}_{i,\rho}} f(t,i,\rho-1,y)\Px[C_{t+\dt}-C_t=1]\Px[R_{t+\dt}-R_t=1]\Px\Big[Q\leq \lambda-\Delta\lambda^{\infty,3}_{i,\rho}-y\Big]dy\\
		&=\int_0^{\lambda-\Delta\lambda^{\infty,3}_{i,\rho}} f(t,i,\rho-1,y)\Big[y\dt+o(\dt)\Big]\Big[1-e^{-i\mu\dt}\Big]\Px\Big[Q\leq \lambda-\Delta\lambda^{\infty,3}_{i,\rho}-y\Big]dy\\
		&=\int_0^{\lambda-\Delta\lambda^{\infty,3}_{i,\rho}} f(t,i,\rho-1,y)\Big[y\dt+o(\dt)\Big]\Big[i\mu\dt+o(\dt)\Big]\Px\Big[Q\leq \lambda-\Delta\lambda^{\infty,3}_{i,\rho}-y\Big]dy\\
		&=\int_0^{\lambda-\Delta\lambda^{\infty,3}_{i,\rho}} yi\mu(\dt)^2f(t,i,\rho-1,y)\Px\Big[Q\leq \lambda-\Delta\lambda^{\infty,3}_{i,\rho}-y\Big]dy+o(\dt)\\
		&=o(\dt)
	\end{align*}
and
\begin{align*}
		\Px\Big[E_{IV}(t,i,\rho,\lambda)\Big]&=\int_0^{\lambda-\Delta\lambda^{\infty,4}_{i,\rho}} f(t,i,\rho,y)\Px[C_{t+\dt}-C_t=0]\Px[R_{t+\dt}-R_t=0]dy\\
		&=\int_0^{\lambda} f(t,i,\rho,y)\Big[1-y\dt + o(\dt)\Big]e^{i\mu\dt}dy\\
		&=\int_0^{\lambda} f(t,i+1,\rho-1,y)\Big[1-y\dt + o(\dt)\Big]\Big[1-i\mu\dt+o(\dt)\Big]dy\\
		&=(1-i\mu\dt)\int_0^{\lambda} f(t,i+1,\rho-1,y)\Big[1-y\dt + o(\dt)\Big]dy+o(\dt)\\
		&=(1-i\mu\dt)F(t,i,\rho,y)-\int_0^\lambda y\dt f(t,i,\rho,y)dy+o(\dt)\\
	\end{align*}
	
	Therefore, by using Equation (\ref{eqn:sum:events}) and recalling that $F\Big(t+\dt,i,\rho,\lambda-r(\lambda-\lambda^\infty(i,\rho))\dt\Big)=\Px\Big[E(t+\dt;t,i,\rho,\lambda)\Big]$, we have that
	\begin{align*}
	F\Big(t+\dt,i,\rho,\lambda-r(\lambda-\lambda^\infty(i,\rho))\dt\Big)&=\int_0^{\lambda-\Delta\lambda^{\infty,1}_{i,\rho}} y\dt f(t,i-1,\rho,y)\Px\Big[Q\leq \lambda-\Delta\lambda^{\infty,1}_{i,\rho}-y\Big]dy\\
	&+\int_0^{\lambda-\Delta\lambda^{\infty,2}_{i,\rho}} (i+1)\mu\dt f(t,i+1,\rho-1,y)dy\\
	&+(1-i\mu\dt)F(t,i,\rho,y)-\int_0^\lambda y\dt f(t,i,\rho,y)dy+o(\dt)
	\end{align*}
	Dividing by $\dt$ and letting $\dt\to0$,
		\begin{align*}
	\frac{\partial}{\partial t}F\Big(t,i,\rho,\lambda\Big)-r(\lambda-\lambda^\infty(i,\rho))\frac{\partial}{\partial \lambda}F(t,i,\rho,\lambda) &=\int_0^{\lambda-\Delta\lambda^{\infty,1}_{i,\rho}} y f(t,i-1,\rho,y)\Px\Big[Q\leq \lambda-\Delta\lambda^{\infty,1}_{i,\rho}-y\Big]dy\\
	&+\int_0^{\lambda-\Delta\lambda^{\infty,2}_{i,\rho}} (i+1)\mu f(t,i+1,\rho-1,y)dy\\
	&-i\mu F(t,i,\rho,y)-\int_0^\lambda y f(t,i,\rho,y)dy
	\end{align*}
	Differentiating with respect to $\lambda$; applying Leibniz Integral Rule where appropriate; and using Assumption (\ref{assump:Q}), we get the result.
\end{proof}

\begin{proof}[Proof of Theorem \ref{thm:pde1}]

Define
\begin{align} \label{defn:psi}
\psi(t,i,\rho,s)&:=\int_0^\infty e^{-s\lambda}f(t,i,r,\lambda)d\lambda\\ \label{defn:xi}
\varphi(t,z,w,s)&:=\Ex\left[z^{I_t}w^{R_t}e^{-s\Lambda(t)}\right].
\end{align}

 By definition,
\begin{align*}
\varphi(t,z,w,s)&=\Ex\left[z^{I_t}w^{R_t}e^{-s\Lambda(t)}\right]\\
		&=\sum\limits_{i=0}^\infty \sum\limits_{\rho=0}^\infty \int_0^\infty z^iw^\rho e^{-s\lambda} f(t,i,\rho,\lambda)d\lambda\\
		&=\sum\limits_{i=0}^\infty \sum\limits_{\rho=0}^\infty z^iw^\rho \int_0^\infty e^{-s\lambda} f(t,i,\rho,\lambda)d\lambda\\
		&=\sum\limits_{i=0}^\infty \sum\limits_{\rho=0}^\infty z^iw^\rho \psi(t,i,\rho,s)\\
\end{align*}
Thus, the strategy is to first use Theorem \ref{thm:pde:f} to find a PDDE equation that characterizes the function $\psi(\sbt)$ and then transform that PDDE as the corresponding one for $\varphi(\sbt)$.

Multiplying equation \ref{pde:f} by $e^{-s\lambda}$ and integrating over $(0,\infty)$ we get:
\begin{align*}
\dfrac{\partial }{\partial t}\int_0^\infty f(t,i,\rho,\lambda)e^{s\lambda}d\lambda&-\int_0^\infty\frac{\partial}{\partial \lambda}(r\lambda f(t,i,\rho,\lambda))e^{-s\lambda}d\lambda+r\lambda^\infty(i,\rho)\int_0^\infty e^{-s\lambda}\frac{\partial}{\partial \lambda} f(t,i,\rho,\lambda)d\lambda=\\
&\int_0^\infty \int_0^{\lambda-\Dl{1}}yf(t,i-1,\rho,y)d\Px[Q\leq \lambda-\Dl{1}-y]dye^{-s\lambda}d\lambda+\\
 & (i+1)\mu \int_0^\infty f(t,i+1,\rho-1,\lambda-\Dl{2})e^{s\lambda}d\lambda-\\
&\int_0^\infty(i\mu+\lambda)f(t,i,\rho,\lambda)e^{-s\lambda}d\lambda
\end{align*}
Using integration by parts and Fubini's theorem, and relation (\ref{defn:psi})
\begin{align*}
\dfrac{\partial }{\partial t}\psi(t,i,\rho,s)&-\int_0^\infty rs\lambda f(t,i,\rho,\lambda)e^{-s\lambda}d\lambda+r\lambda^\infty(i,\rho)\int_0^\infty s f(t,i,\rho,\lambda)e^{-s\lambda}d\lambda=\\
&\int_0^\infty \int_{y+\Dl1}^{\infty}yf(t,i-1,\rho,y)d\Px[Q\leq \lambda-\Dl{1}-y]e^{-s\lambda}d\lambda dy+\\
 & (i+1)\mu e^{-s\Dl2} \int_{-\Dl2}^\infty f(t,i+1,\rho-1,\lambda)e^{s\lambda}d\lambda-\\
&i\mu\psi(t,i,\rho,s)-\\
&\int_0^\infty\lambda f(t,i,\rho,\lambda)e^{-s\lambda}d\lambda
\end{align*}
 Let $M_Q(s)=\Ex[e^{-sQ}]$ denote the moment generating function of $Q$, which by assumption it exists on a neighbourhood of 0. Then, by using the fact that $f(t,i,\rho,\lambda)=0$ if $\lambda<0$, definition (\ref{defn:psi}) and Fubini's Theorem we can further simplify the above equation as
\begin{align*}
\dfrac{\partial }{\partial t}\psi(t,i,\rho,s)&+rs \frac{\partial}{\partial s}\psi(t,i,\rho,s)+rs\lambda^\infty(i,\rho)\psi(t,i,\rho,s)=\\
& e^{-s\Dl1}M_Q(s)\frac{\partial}{\partial s}\psi(t,i-1,\rho,s)+(i+1)\mu e^{-s\Dl2} \psi(t,i+1,\rho-1,s)\\
&-i\mu\psi(t,i,\rho,s)+\frac{\partial}{\partial s}\psi(t,i,\rho,s)
\end{align*}
Finally, rearranging terms we have
\begin{equation}\label{pde:f:2}
\begin{aligned}
\dfrac{\partial }{\partial t}\psi(t,i,\rho,s)&+(rs-1) \frac{\partial}{\partial s}\psi(t,i,\rho,s)+rs\lambda^\infty(i,\rho)\psi(t,i,\rho,s)+e^{-s\Dl1}M_Q(s)\frac{\partial}{\partial s}\psi(t,i-1,\rho,s)=\\[.3cm]
&(i+1)\mu e^{-s\Dl2} \psi(t,i+1,\rho-1,s)-i\mu\psi(t,i,\rho,s)
\end{aligned}
\end{equation}
At this point, we use the particular form of the baseline intensity of the Hawkes process $C_t$, $\lambda^\infty(i,\rho)$. By using Assumption (\ref{assump:baseline}), we have that
\[
\lambda^\infty(i,\rho)=\lambda_0+i\log(\alpha)+\rho\log(\beta)
\]
and substituting this into the PDDE (\ref{pde:f:2}) we have

\begin{equation}\label{pde:f:3}
\begin{aligned}
\dfrac{\partial }{\partial t}\psi(t,i,\rho,s)&+(rs-1) \frac{\partial}{\partial s}\psi(t,i,\rho,s)+rs\lambda^\infty(i,\rho)\psi(t,i,\rho,s)+\left(\frac{1}{\alpha}\right)^sM_Q(s)\frac{\partial}{\partial s}\psi(t,i-1,\rho,s)=\\[.3cm]
&(i+1)\mu \left(\alpha\beta\right)^s \psi(t,i+1,\rho-1,s)-i\mu\psi(t,i,\rho,s)
\end{aligned}
\end{equation}

By doing some algebra and using the fact $\psi(t,i,\rho,s)=0$ if $i<0$ or $\rho<0$, we have the following relations
\begin{align} \label{relation:series:1}
\sum\limits_{i=0}^\infty \sum\limits_{\rho=0}^\infty iz^iw^\rho\psi(t,i,\rho,s)&=z\frac{\partial}{\partial z}\varphi(t,z,w,s)\\[.3cm]
\sum\limits_{i=0}^\infty \sum\limits_{\rho=0}^\infty \rho z^iw^\rho\psi(t,i,\rho,s)&=w\frac{\partial}{\partial w}\varphi(t,z,w,s)\\[.3cm]
\sum\limits_{i=0}^\infty \sum\limits_{\rho=0}^\infty z^iw^\rho\psi(t,i-1,\rho,s)&=z\varphi(t,z,w,s)\\[.3cm] \label{relation:series:2}
\sum\limits_{i=0}^\infty \sum\limits_{\rho=0}^\infty (i+1)z^iw^\rho\psi(t,i+1,\rho-1,s)&=w\frac{\partial}{\partial z}\varphi(t,z,w,s)
\end{align}

Multiplying both sides by $z^iw^\rho$; adding those terms as a series; using Assumption (\ref{assump:baseline}); and using relation (\ref{defn:xi}), we obtain the result.

\end{proof}

\begin{proof}[Proof of Lemma \ref{lemma:firtsmom}]

Taking equation (\ref{pde:Gen:Fun}), substituting the value function $\varphi(t,z,w,s)$ and computing some of the derivatives within,
\begin{equation}\label{pde:Gen:Fun2}
\begin{aligned}
\dfrac{\partial }{\partial t}\Ex\left[z^{I_t}w^{R_t}e^{-s\Lambda(t)}\right]&+\left[rs-1+\left(\frac{1}{\alpha}\right)^sM_Q(s)z\right] \Ex\left[-\Lambda(t)z^{I_t}w^{R_t}e^{-s\Lambda(t)}\right]\\
&+\bigg[\left(\mu+rs\log\left(\alpha\right)\right)z-\mu\left(\alpha\beta\right)^sw\bigg]\Ex\left[I_t z^{I_t-1}w^{R_t}e^{-s\Lambda(t)}\right]\\[.3cm]
&-\bigg[rs\log(\beta)w\bigg]\Ex\left[R_t z^{I_t}w^{R_t-1}e^{-s\Lambda(t)}\right]=
-rs\lambda_0
\Ex\left[z^{I_t}w^{R_t}e^{-s\Lambda(t)}\right]
\end{aligned}
\end{equation}
Taking partial derivatives with respect to $s$ from equation (\ref{pde:Gen:Fun2}) and plugging in $(z,w,s)=(1,1,0)$,
\begin{equation}\label{pde:firstmom:s}
-\dfrac{\partial }{\partial t}\Ex\left[\Lambda(t)\right]-\bigg[r-\log\left(\alpha\right) -\Ex[Q]\bigg] \Ex\left[\Lambda(t)\right]+\bigg[r\log\left(\alpha\right)-\mu\log\left(\alpha\beta\right)\bigg]\Ex[I_t]-\bigg[r\log(\beta)\bigg]\Ex\left[R_t\right]=-r\lambda_0
\end{equation}
where we have used that $M_Q(0)=1$ and $\left.\frac{d}{ds}M_Q(s)\right|_{s=0}=-\Ex[Q]$. Similarly, taking partial derivatives with respect to $z$ from equation (\ref{pde:Gen:Fun2}) and plugging in $(z,w,s)=(1,1,0)$,
\begin{equation}\label{pde:firstmom:z}
\dfrac{\partial}{\partial t}\Ex\left[I_t\right]-\Ex\left[\Lambda(t)\right]+\mu\Ex\left[I_t\right]=0
\end{equation}
Finally, taking partial derivatives with respect to $w$ from equation (\ref{pde:Gen:Fun2}) and plugging in $(z,w,s)=(1,1,0)$,

\begin{equation}\label{pde:firstmom:w}
\dfrac{\partial }{\partial t}\Ex\left[R_t\right]-\mu\Ex\left[I_t\right]=0
\end{equation}
Arranging equations (\ref{pde:firstmom:s})-(\ref{pde:firstmom:w}) in matrix form, we get the result.
\end{proof}

%
%
%
%
%
%

\section{Simulation Algorithms}\label{Sec:algorithms}

\SetCustomAlgoRuledWidth{7in} 

When the baseline intensity $\lambda^\infty(i,\rho)$ is constant, several simulation algorithms are available. However, it is rare to find simulation algorithms for more general intensities. 

However, by using the classical thinning algorithm by Ogata, we can simulate our process, but some care and considerations need to be taken.

Before showing the main algorithm, we need to be able to evaluate the intensity at any time $t$ given the history. In this case, the history is provided in an array {\tt tIRarray} which is a $3\times n$ array with Row 1 having all the times at which $I_t$ or $R_t$ changed. Rows 2 and 3 have the value of $I_t$ and $R_t$, respectively, at the times on row 1.


\begin{algorithm}[H]
\SetAlgoLined
\KwResult{Compute $\Lambda(t)=\lambda^\infty(i,\rho)+\int_0^t Q_se^{-r(t-s)}dN_s$, for $t>0$\; }
\textcolor{Cerulean}{ Inputs:} $C_t$,$Q_t$, {\tt tIRarray},$t$,$\ldots$\;
\textcolor{Cerulean}{Output:} \textcolor{WildStrawberry}{{\tt Intensity}}=function()\;
\textcolor{Cerulean}{ Initialization}\;
 
 $n\;\leftarrow\; $length$(C_t)=||C_t||$\;
 $aux \;\leftarrow\; sum(C_t<t)=\int_0^t dC_s$\;
\textcolor{Cerulean}{Computation of the intensity function}\;
 \eIf{$(n=0\ ||\ aux=0)$}{
 \textbf{return} $\lambda^\infty(i_0,\rho_0)$\;
 }{
 	$Qs \;\leftarrow\; Q_t[1:aux]$\;
	$Cs \;\leftarrow\; C_t[1:aux]$\;
	$exps \;\leftarrow\;\exp(-r(t-Cs))$\;
	$integ \;\leftarrow\; sum(Qs*exps)$\;
	$lam \;\leftarrow\; \lambda^\infty(I_t,R_t)$\;
	\textbf{return} $lam+integ$\;
 }
 \caption{Computation of the Intensity function at time $t$.}
\end{algorithm}

Next, we will show the algorithm to simulate the state dependent Hawkes process. Notice how the intensity function needs to be updated between new infections as recoveries affect the intensity function also.

\begin{algorithm}[H]
\SetAlgoLined
\KwResult{Simulation of State-Dependent Hawkes Process with
intensity function $\Lambda(t)=\lambda^\infty(i,\rho)+\int_0^t Q_se^{-r(t-s)}dN_s$ }
\textcolor{Cerulean}{Inputs: } $T$, Qvals, Qprobs, $i_0,\rho_0,\ldots$\;
\textcolor{Cerulean}{Output:} \textcolor{WildStrawberry}{{\tt SDHP}}=function()\;
\textcolor{Cerulean}{Initialization}\;
 {\tt tIRarray} $\;\leftarrow\;[0,i_0,\rho_0]^T$\; 
 $t\;\leftarrow\;0$\;
 \While{$t \leq T$}{
 	$M \;\leftarrow\; $\textcolor{WildStrawberry}{{\tt Intensity}} $(t,\ldots)$\;
	\If{$M\leq0$}{
		\textbf{break}\;
	}
	$E\sim\text{Exp}(M)$\;
	$t \;\leftarrow\; t+E$\;
	$U\sim\text{Unif}(0,M)$\;
 }
 \textbf{\#\#} Since $\lambda^\infty(i,\rho)$ can change between increments of $C_t$ (via only a recovery), we need to update our intensity function between 0 and the next time candidate for a jump of $C_t$ which is $E$\;
 $\text{Raux} \;\leftarrow\; 0$\;
 \While{$\text{Raux}<E$}{
 	$\text{PossibleR}\sim\text{Exp}(\;\mu\cdot\text{Icurrent}\;)$\;
 	\If{$(\text{PossibleR+Raux})>E$}{
 		\textbf{break}
 	}
 	$\text{Raux} \;\leftarrow\;\text{Raux}+\text{PossibleR}$\;
 	\textbf{Update} {\tt tIRarray}\;
 }
 \If{$t<E$}{ 
 	\textbf{\#\#} Perform a thinning routine to accept new Infections\;
 	$\Lambda_N \;\leftarrow\; \textcolor{WildStrawberry}{{\tt Intensity}}(t,\ldots)$\;
 	\If{$U<\Lambda_N$}{
 			\textbf{Accept }$t$ as time of Infection\;
			\textbf{Update }all parameters\;
 		}
 	}
 \caption{Simulation of a trajectory of the State-dependent Hawkes SIR model}
 \label{algo:SIR}
\end{algorithm}

\newpage

\bibliographystyle{apalike}
\bibliography{submission_arxiv}

\end{document}